\documentclass[pdflatex,sn-mathphys-ay]{sn-jnl}% Math and Physical Sciences Numbered Reference Style
\usepackage{graphicx}%
\usepackage{multirow}%
\usepackage{amsmath,amssymb,amsfonts}%
\usepackage{amsthm}%
\usepackage{mathrsfs}%
\usepackage[title]{appendix}%
\usepackage{xcolor}%
\usepackage{textcomp}%
\usepackage{manyfoot}%
\usepackage{booktabs}%
\usepackage{algorithm}%
\usepackage{algorithmicx}%
\usepackage{algpseudocode}%
\usepackage{listings}%
\usepackage{tikz}
\usetikzlibrary{arrows.meta}
\usetikzlibrary{calc}

\theoremstyle{thmstyleone}%
\newtheorem{theorem}{Theorem}%  meant for continuous numbers
\newtheorem{proposition}[theorem]{Proposition}% 

\theoremstyle{thmstyletwo}%

\theoremstyle{thmstylethree}%
\newtheorem{assumption}{Assumption}%

\newcommand{\bz}{{\textbf{z}}} 
\newcommand{\bgamma}{{\boldsymbol{\gamma}}} 
\newcommand{\bmu}{{\boldsymbol{\mu}}} 
\newcommand{\brho}{{\boldsymbol{\rho}}} 
\newcommand{\btheta}{{\boldsymbol{\theta}}} 
\newcommand{\blambda}{{\boldsymbol{\lambda}}} 
\newcommand{\bS}{{\textbf{S}}}

\begin{document}

\title[Article Title]{A latent space network model for dynamic neural latent embedding}

\author*[1,2,]{\fnm{Riccardo} \sur{Rastelli}}\email{riccardo.rastelli@ucd.ie}

\author[3]{\fnm{Shizhe} \sur{Chen}}\email{szdchen@ucdavis.edu}

\affil[1]{\orgdiv{School of Mathematics and Statistics}, \orgname{University College Dublin}, \country{Ireland}}

\affil[2]{\orgdiv{Rinn Artificial Intelligence}, \orgname{University College Dublin}, \country{Ireland}}

\affil[3]{\orgdiv{Department of Statistics}, \orgname{University of California, Davis}, \country{USA}}

\abstract{We introduce a novel latent space network model for analyzing multivariate time series of neural spike-train data. The methodology is motivated by an experimental study in mice, where neuronal responses were collected under a sequence of visual discrimination tasks. We adopt a latent variable framework to model the firing rates of aggregated brain areas, while simultaneously inferring the interactions between regions via a hidden network structure. This interaction network is embedded in a geometric latent space, enabling interpretable visualizations and novel model-based summaries. The proposed framework provides an intuitive interpretation of the latent variables, which bear a conceptual connection to node eigen-centrality measures. To capture temporal dependence, we incorporate a nested hidden Markov structure that can flexibly represent non-linear shifts that are induced by the changing of experimental conditions. We further establish theoretical properties of the model by deriving sufficient conditions that prevent degeneracy, thereby guiding our model assumptions. Overall, the proposed methodology provides a unified framework to characterize brain activity, its temporal dynamics, and spillover effects through a hidden latent space network model.}
\keywords{network analysis, latent space, brain networks, hidden Markov models, Bayesian inference.}

\maketitle

%!TEX root = ../sn-article.tex

\section{Introduction}\label{sec:introduction}

Recent advances in neurophysiology have moved the field from single‐site recordings to probes that sample hundreds of neurons spread across multiple brain regions. Devices such as Neuropixels offer near-cellular spatial resolution and millisecond precision with minimal tissue disruption, allowing animals to perform sophisticated, hours-long tasks \citep{jun2017fully, steinmetz2018challenges, steinmetz2019distributed}. The technology advancement opens up opportunities to understand how neural systems evolve as behavior unfolds and learning progresses \citep{stringer2019spontaneous}. Consequently, the question of interest moves beyond identifying which brain regions activate to inferring dynamic shifts in the functional connectivity of neurons. In this paper, we provide a Bayesian framework that captures these evolving connectivity patterns using point processes and latent space modeling. 

Latent variable methods already play a prominent role in neural data analysis \citep{paninski2007statistical,cunningham2014dimensionality}. State-space models, for example, treat neural activity as governed by a low-dimensional latent process, known as the state, which is often modeled as dynamic systems or stochastic processes \citep{smith2003estimating, yu2008gaussian}. The neural activities are then linked to the latent state either through parametric models (e.g., point process) or nonparametric models using machine-learning methods. Although these techniques successfully summarize population activity, they typically represent the whole population with a single latent vector and thus do not yield an explicit, evolving network among nodes.  To bridge this gap, we develop a dynamic latent space network model in which each node, whether a single neuron or an anatomically defined area, occupies a coordinate in an evolving latent space; distances between coordinates govern the strength of directed interactions between nodes, allowing the inferred network to evolve smoothly as experimental conditions change. This construction retains the interpretability of geometric embeddings while directly targeting the neuroscientist's question of how functional connectivity reorganizes over the course of complex experiments.

Latent space network models~\citep{hoff2002latent}  have emerged as a prominent statistical framework for the analysis of networks, and, more generally, interaction data.
These models have been used in a variety of applied settings, including brain networks~\citep{durante2017nonparametric, wilson2020hierarchical, casarin2026bayesian}. For a recent review of latent space models and their applications, we refer the reader to \citet{kim2018review} and \citet{LPNM_2023}.
In our model, each node corresponds to a neuronal unit or an anatomically defined region. Directed edges encode the influence of one node's firing history on another's instantaneous rate, with edge strengths given by a function of the distance between the time-varying latent embeddings of the nodes.  The resulting log-rate vector satisfies a linear fixed-point equation that coincides with the stationary intensity of a multivariate Hawkes process~\citep{rousseau2018nonparametric}. Using a nested hidden Markov formulation, we characterize the temporal dynamics of the spikes both within and across trials, hence creating a formal framework that allows us to evaluate the persistence of activity patterns or their change over time.
Another central and novel aspect of our proposal regards the specification of the neuronal unit interactions, which is inspired by the literature on spatial hidden Markov models and multivariate time series analysis~\citep{bartolucci2022hidden, tancini2026spatio}. 
A key difference from these related research areas is that our formulation avoids model intractability as it relies on fixed point equations that can be solved analytically, linked to measures of node centralities, and, in particular to alpha-centrality~\citep{bonacich1987power,bonacich2001eigenvectorlike}. Remarkably, this connection is especially fitting for brain network analysis, where eigen-centrality measures have independently been highlighted as informative summaries of functional connectivity~\citep{bullmore2009complex,rubinov2010complex,lohmann2010eigenvector,mantzaris2013dynamic}.

As regards model fitting, we adopt a Bayesian framework and develop a suitable Markov chain Monte Carlo sampler to draw samples from the posterior distribution of the model. 
This connects to a vast literature on Bayesian latent space network models, which includes dynamic settings~\citep{sewell2015latent, loyal2024fast, casarin2026bayesian} and more recent time series frameworks that share similarities with our context~\citep{kaur2024latent, kaur2024dynamic}.
Although in this paper we focus on spike trains, our methodology extends to other relational event data such as social network interactions, contributing to the literature on network autocorrelation models \citep{sewell2017network, dittrich2017bayesian} and their extensions in other contexts \citep{sweet2020latent, tafakori2022measuring}. 

The rest of the paper is organized as follows. Section~\ref{sec:data} introduces the visual discrimination task and the neural recordings that motivate this analysis.
In Section \ref{sec:model}, we define our model and characterize its properties. In Section \ref{sec:methods} we outline our inferential framework, discussing the issue of non-identifiability commonly associated with latent space network models. In Section \ref{sec:simulation}, we propose two studies on synthetic data. In the first study, we show that the estimators that we use in our framework are consistent as the number of nodes or the number of trials increases. In the second study, we use a larger dataset, which is comparable to those analyzed in the real application of Section \ref{sec:rda}. In Section \ref{sec:discussion} we summarize and discuss the results of the research.

%!TEX root = ../sn-article.tex

\section{Visual Discrimination Task}\label{sec:data}

Visual discrimination tasks provide a controlled setting for studying how sensory evidence is transformed into choices and actions. In these tasks, head-fixed mice select the visual stimuli of higher contrasts by turning a wheel. Related wheel-based paradigms have been standardized across laboratories by the International Brain Laboratory \citep{ibl2021standardized}. The data analyzed here were collected by \citet{steinmetz2019distributed} and use a bilateral-stimulus variant of this general experimental paradigm. The original study involved 10 mice across 39 recording sessions, in each of which the animals completed several hundred trials of a two-alternative visual-decision task. Each trial begins with the simultaneous presentation of drifting-grating stimuli on two lateral screens (the \textit{stimulus onset}), with one of four contrast levels $\{0, 0.25, 0.5, 1\}$ on each side. After a brief delay, an auditory \textit{go cue} signals the mouse to indicate, via a forepaw-controlled wheel, which side carries the higher contrast; we refer to the moment of wheel-movement onset as the \textit{reaction time}. Correct choices are rewarded with water, whereas incorrect choices or failures to respond within one second are followed by negative feedback; we refer to the moment of feedback delivery as the \textit{feedback time}. Full details of the feedback protocol can be found in \cite{steinmetz2019distributed}. 

The four contrast levels on each screen define 16 bilateral stimulus combinations, which we group into four task classes: $\texttt{0-0}$, $\texttt{identical}$, $\texttt{0-other}$, and $\texttt{different}$. These classes summarize stimulus configurations that impose related decision requirements.
The $\texttt{0-0}$ class contains trials with zero contrast on both screens, in which the mouse is rewarded for keeping the wheel still for $1.5$ seconds. In the case of $\texttt{identical}$, the images are shown with some level of contrast, but they are indistinguishable. For this type of task, the mice do not have sufficient information to make a correct decision, and, according to the study design, a reward is randomly delivered if the mice turn the wheel in either direction \citep{steinmetz2019distributed}. The $\texttt{0-other}$ class contains trials with zero contrast on one screen and positive contrast on the other. The $\texttt{different}$ class contains trials with unequal positive contrasts, in which the mouse is expected to move the higher-contrast stimulus toward the center and receives positive feedback for a correct response.

During each session, spiking activity from hundreds of neurons in the \textit{left} hemisphere was recorded with two Neuropixels probes \citep{steinmetz2018challenges}. Probe positions were deliberately varied across sessions to sample a broad set of cortical and subcortical regions; in any given session, this yields a network of brain areas whose spiking activity unfolds in parallel over time, motivating a model that can capture directed interactions and their temporal dynamics within the session. In this analysis, we treat each anatomically identified brain area observed in a session as a network node. For each trial, we aggregate the spikes of all neurons assigned to the same brain area into 50-ms spike-count bins, producing the observed nodal time series. The real-data analysis focuses on Session 12, which contains 340 trials and 12 recorded area labels. We exclude the label \texttt{root} from the displayed analyses because it does not identify a unique anatomical area, leaving 11 brain areas.

We first examine whether the observed dependence among brain areas changes across behaviorally salient phases of the task and across the experimental session. Figure~\ref{fig:within_trial_correlations} compares empirical zero-lag correlation matrices from successful left-dominant trials in Session 12. The columns correspond to 100-ms windows following stimulus onset, reaction, and feedback, while the rows compare the first and last 20 eligible trials.
\begin{figure*}[ht]
    \centering
    \begin{minipage}[t]{0.33\textwidth}
        \textbf{A) initial trials stimuli}\par\vspace{0.15em}
        \includegraphics[width=\linewidth]{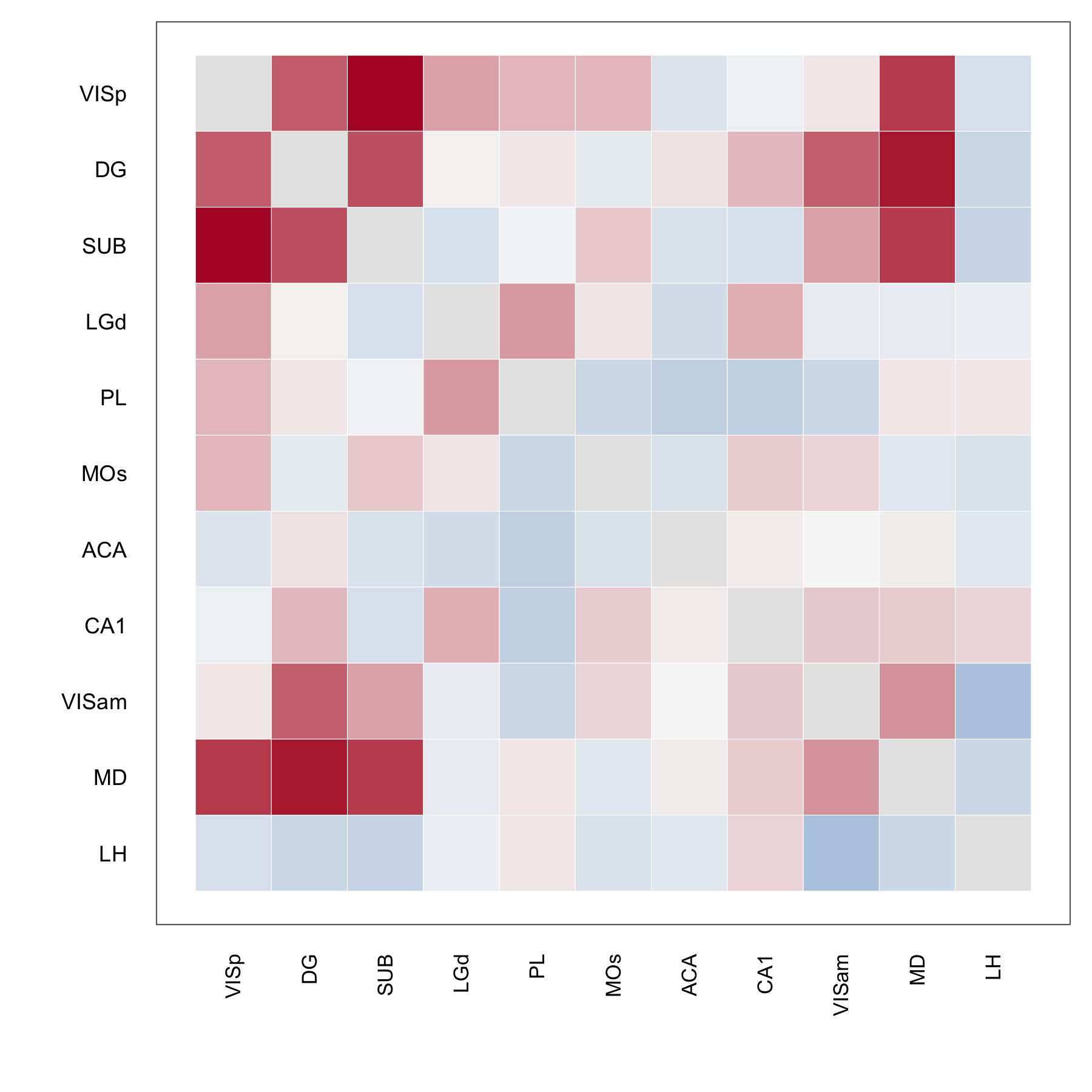}
    \end{minipage}\hfill
    \begin{minipage}[t]{0.33\textwidth}
        \textbf{B) initial trials reaction}\par\vspace{0.15em}
        \includegraphics[width=\linewidth]{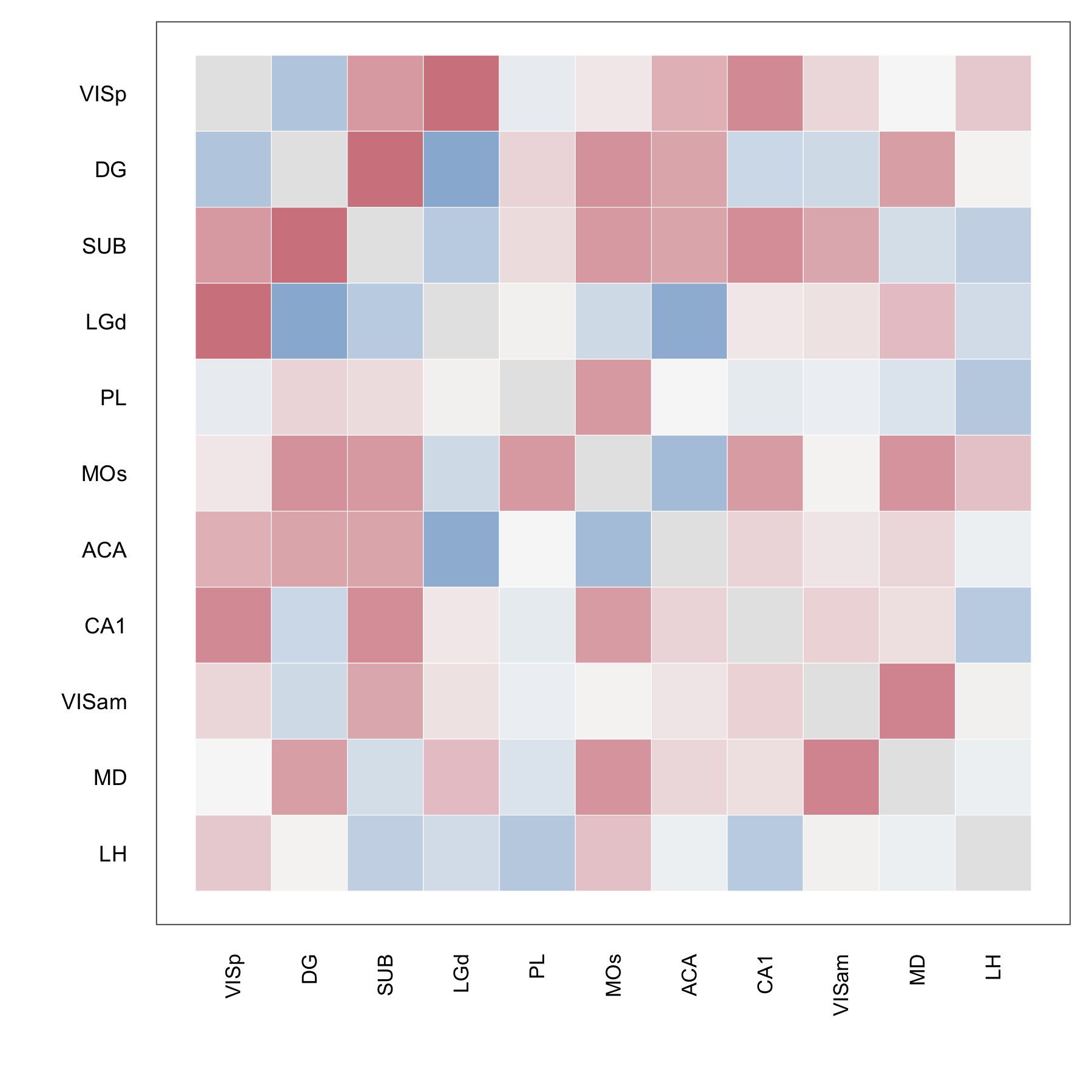}
    \end{minipage}\hfill
    \begin{minipage}[t]{0.33\textwidth}
        \textbf{C) initial trials feedback}\par\vspace{0.15em}
        \includegraphics[width=\linewidth]{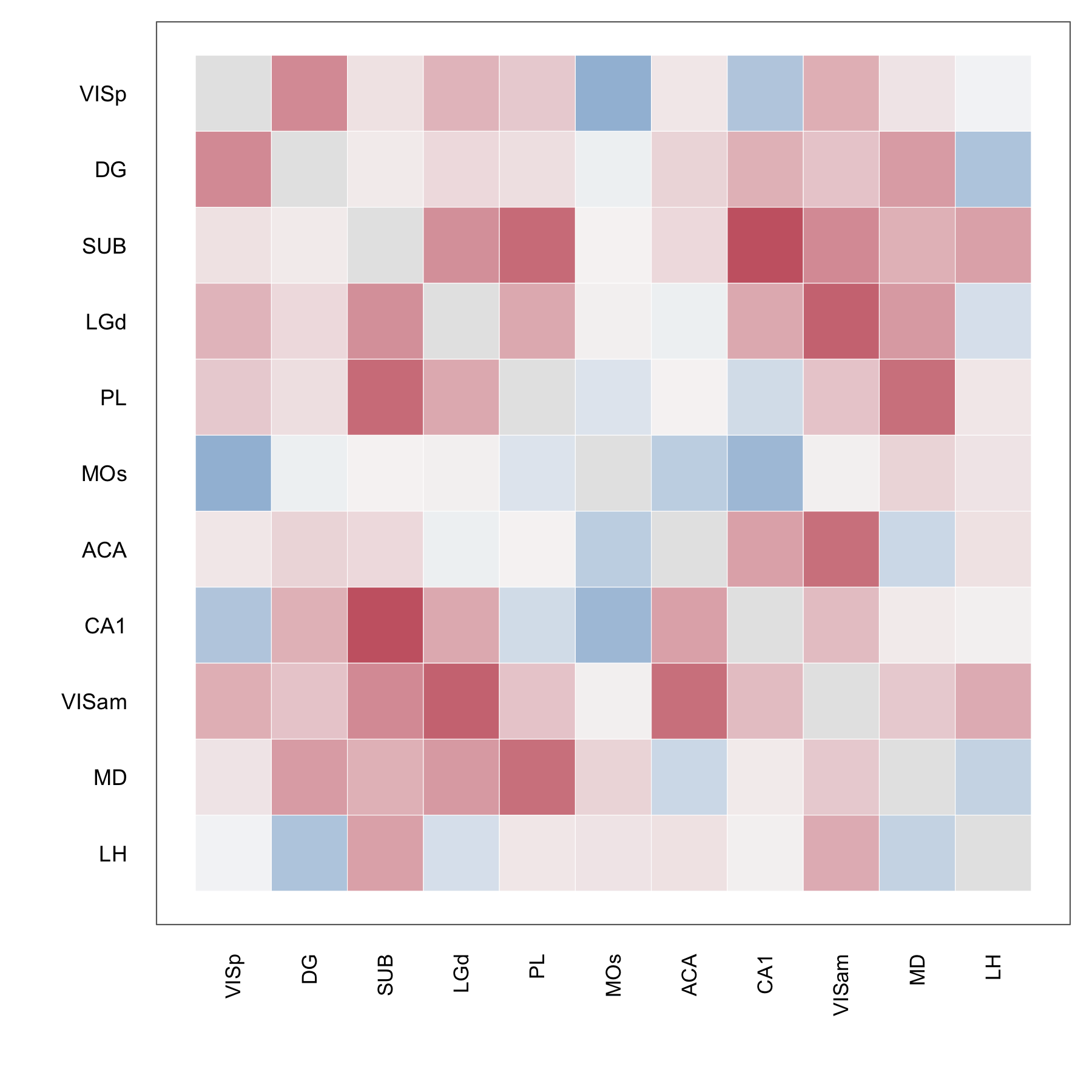}
    \end{minipage}
    \begin{minipage}[t]{0.33\textwidth}
        \textbf{D) final trials stimuli}\par\vspace{0.15em}
        \includegraphics[width=\linewidth]{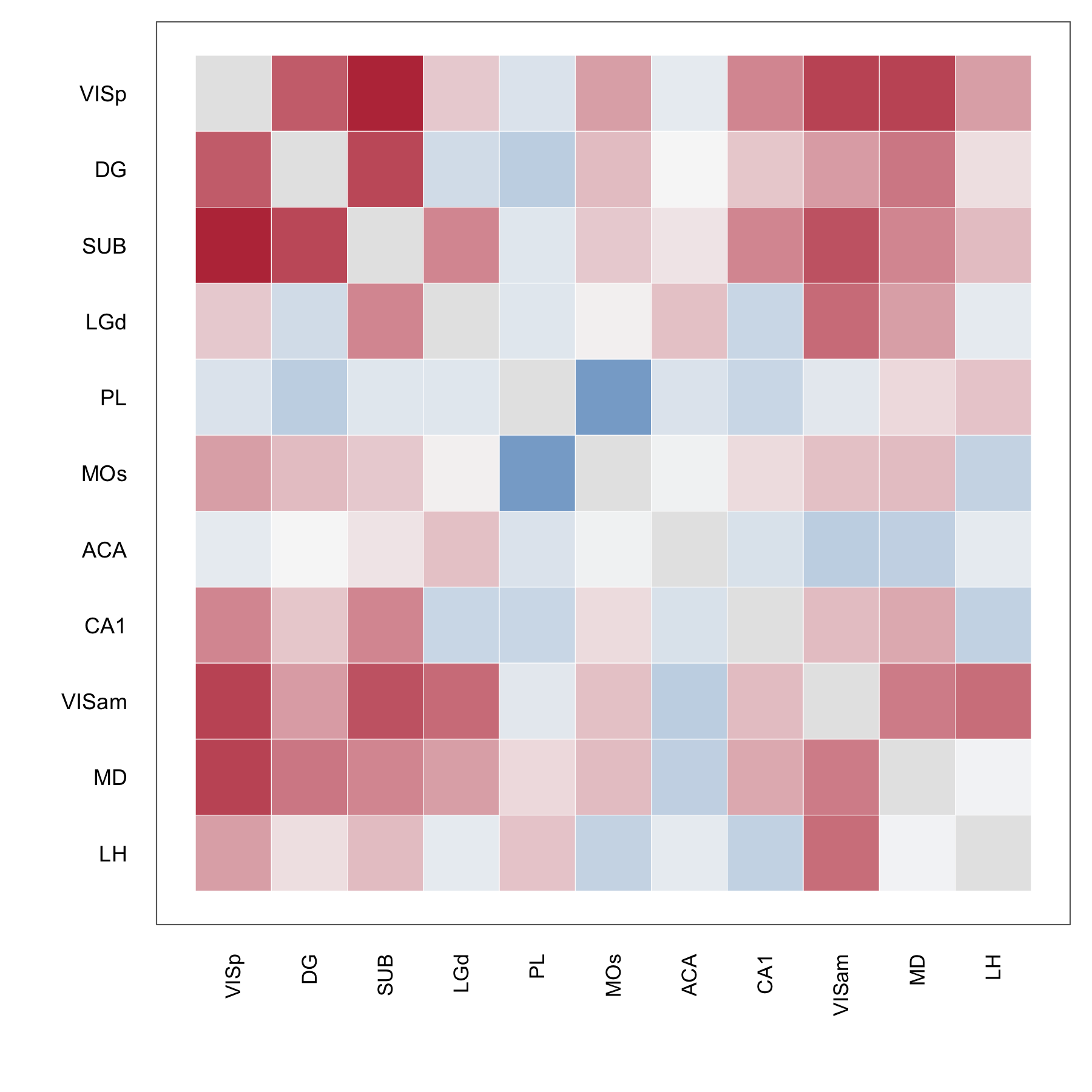}
    \end{minipage}\hfill
    \begin{minipage}[t]{0.33\textwidth}
        \textbf{E) final trials reaction}\par\vspace{0.15em}
        \includegraphics[width=\linewidth]{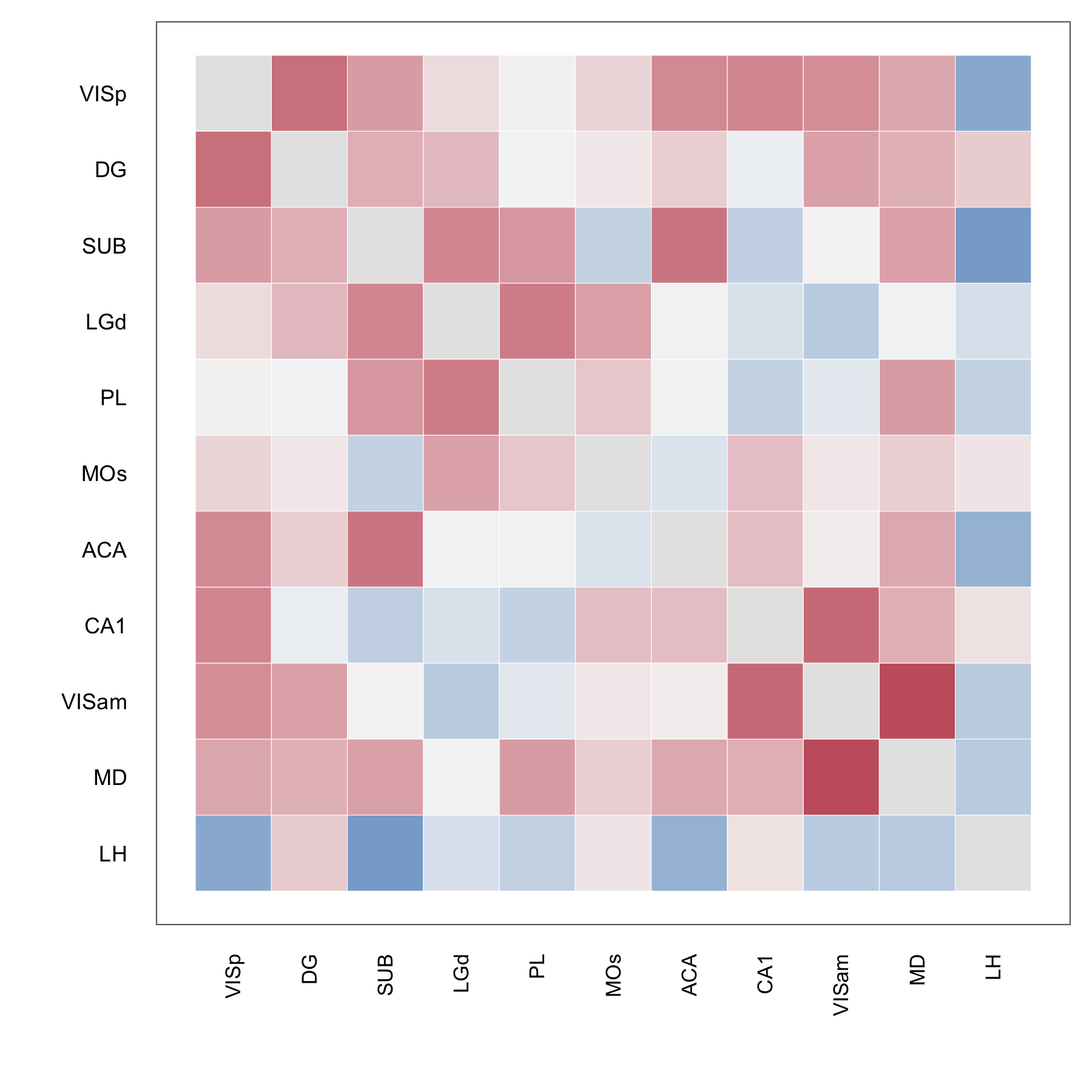}
    \end{minipage}\hfill
    \begin{minipage}[t]{0.33\textwidth}
        \textbf{F) final trials feedback}\par\vspace{0.15em}
        \includegraphics[width=\linewidth]{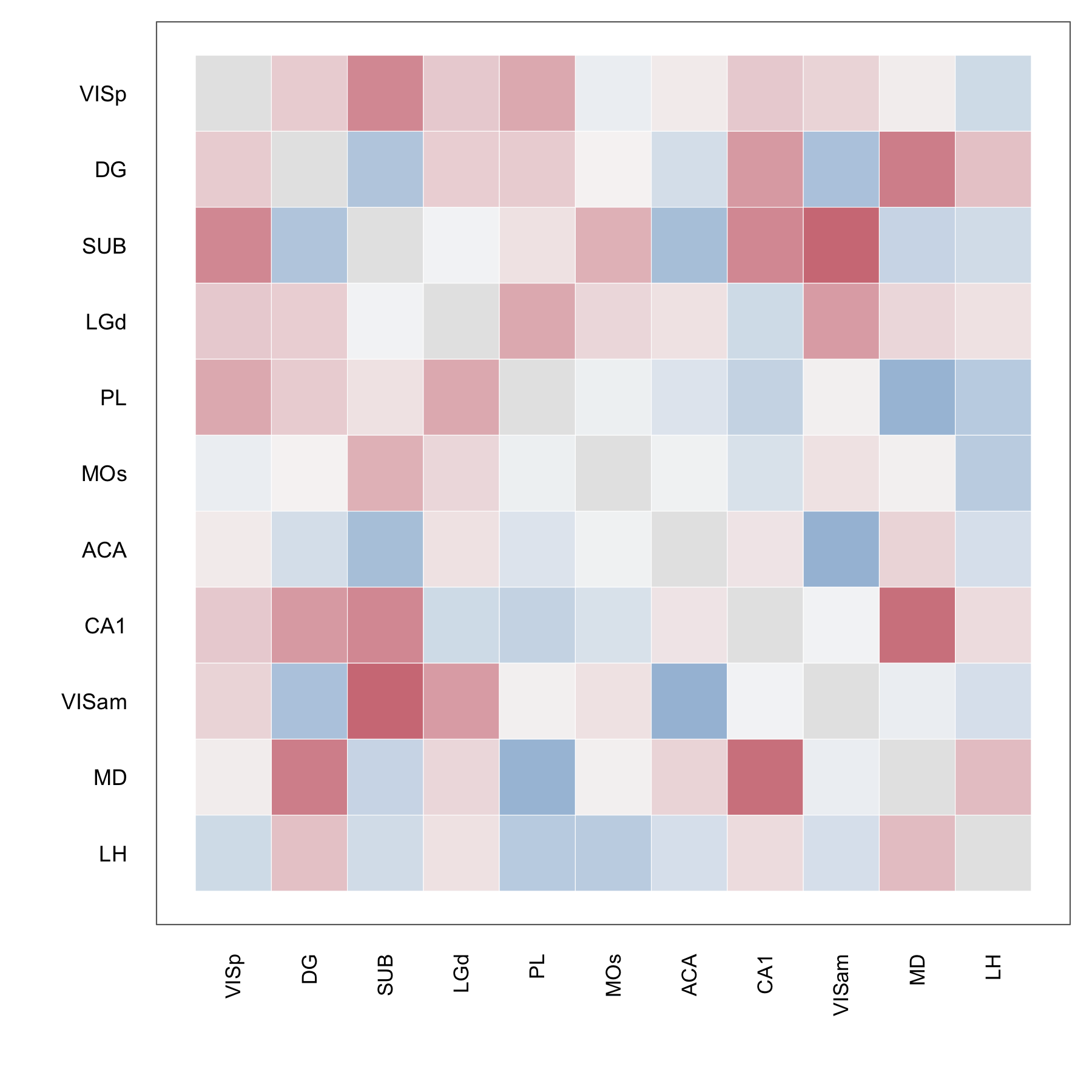}
    \end{minipage}
    \caption{\textbf{Empirical cross-area correlation across task phases.}
    The analysis includes successful left-dominant trials with left contrast equal to 1 and right contrast below 1. Three salient points are selected to separate the task phases, which are the stimuli onset, reaction time, and feedback time. Among the 42 trials with complete, non-overlapping 100-ms windows after all three salient events, panels A--C use the first 20 eligible trials and panels D--F use the last 20. Columns correspond to windows immediately after stimulus onset and before reaction, immediately after reaction and before feedback, and immediately after feedback delivery. Each window contains two 50-ms spike-count bins for each of 11 anatomical areas. Within each group and relative bin, area-specific means are removed, after which the 40 residual count vectors are pooled to calculate the correlation matrix. Blue cells indicate negative correlations, red cells indicate positive correlations, and gray diagonal cells are masked. All six panels use the same symmetric color scale (from $-0.7$ to $0.7$). The phase-specific patterns differ within both early and late trials; the early-to-late differences at a fixed phase are more modest. These matrices summarize empirical zero-lag co-activity and should not be interpreted as causal connectivity estimates.}
    \label{fig:within_trial_correlations}
\end{figure*}

Within both the early and late trial groups, the sign and magnitude of many pairwise correlations change across the stimulus-, reaction-, and feedback-aligned windows. Differences between the early and late panels at the same phase are more modest, but they suggest additional slow variation over the session. Such variation could reflect changes in engagement or fatigue, although these descriptive heatmaps do not identify the underlying behavioral mechanism. The phase-specific differences provide direct motivation for allowing the interaction structure to evolve within a trial.

\section{Model}\label{sec:model}

\subsection{Definition}

The observed data consist of $M$ trials, run sequentially on the same mouse and with independent experimental conditions. In each trial $m=1,\dots,M$, we record spike trains from $N$ brain regions over a time interval consisting of $T_m$ time bins of equal time width $\Delta > 0$. The spiking activity is represented as a collection of counts, for $m=1,\ldots, M$, $i=1,\ldots, N$, $t=1,\ldots, T_m$,
\[
y_{m,t,i} \;=\; \#\{\, \tau_{m,i,k} \in ((t-1)\Delta, t\Delta] \,\}, 
\]  
where  $\tau_{m,i,k}$ is the spike time of the $k$th spike from the $i$th region in the $m$th trial. In the motivating dataset, the number of trials is $M = 340$, the number of brain regions is $N = 12$, the time bin has a width of $0.05$ seconds, and the number of time bins ranges from $18$ to $54$ across trials.

We assume that the spike counts are conditionally independent and follow a Poisson distribution, $\forall m = 1,\dots,M;\ \forall t=1,\dots,T_m;\ \forall i = 1,\dots, N$:
\begin{equation*}
    \left. y_{m,t,i} \mid  \gamma_{m,t,i} \right. \sim {\rm Poisson} \{ \exp(\gamma_{m,t,i}) \},
\end{equation*}
where $\gamma_{m,t,i}$ is the log-rate of each of the spike counts. 
Here, the log-rate $\gamma_{m,t,i}$ is  further characterized as 
\begin{equation}\label{eq:model_2}
    \gamma_{m,t,i} = \mu_{m,t,i} + \frac{1}{N}\sum_{j\neq i}^N S_{m,t,i,j}\gamma_{m,t,j},
\end{equation}
where $\mu_{m,t,i} \in \mathbb{R}$ is a constant, and $S_{m,t,i,j} \in \mathbb{R}$ is an interaction score between the nodes $i$ and $j$ at time $t$ for trial $m$.
In Eq.~\eqref{eq:model_2}, we assume that the activity of each brain region is governed by a spontaneous rate $\mu_{m,t,i}$ and a network-induced effect, which is constructed using a dyadic interaction parameter $S_{m,t,i,j}$ that describes how much each brain region affects the others.
Critically, this interaction effect is not delayed with respect to time, in that it is possible for two brain regions to mutually and simultaneously excite each other, or compete against one another. Although Eq.~\eqref{eq:model_2} includes only the concurrent effects from other regions, the framework could also be generalized to include time-lagged terms; however, we do not pursue this extension here.

Our framework shares similarities and analogies with the multivariate time series literature \citep{lutkepohl2005stablevar}. Eq.~\eqref{eq:model_2} is in a similar format to a multivariate vector autoregression model, whereby the status of a node is explained by the status of the other nodes. As this dependency does not include a time lag, a potentially intractable dependency structure arises, akin to those often observed in spatial hidden Markov models \citep{pasanen2026hidden,douwes2025three}.

In matrix notation, we can rewrite Eq.~\eqref{eq:model_2} as
\begin{equation}\label{eq:model_matrix_notation_1}
    \bgamma_{m,t} = \bmu_{m,t} + \frac{1}{N}\bS_{m,t}\bgamma_{m,t}
\end{equation}
where $\bgamma_{m,t} \in \mathbb{R}^N$, $\bmu_{m,t}\in \mathbb{R}^N$, and $\bS_{m,t} \in \mathbb{R}^{N \times N}$ for $m=1,\dots,M$ and $t=1,\dots,T_m$. 
We can rewrite Eq.~\eqref{eq:model_matrix_notation_1} as 
\begin{equation}\label{eq:model_matrix_notation_2}
    \bgamma_{m,t} = \left(\mathbb{I} - \frac{1}{N}\bS_{m,t}\right)^{-1}\bmu_{m,t}
\end{equation}
within the set of points where the matrix $\mathbb{I} - \bS_{m,t}/N$ can be inverted. Here, $\mathbb{I}$ indicates the identity matrix of size $N\times N$. The existence conditions for $\bgamma$ are discussed in more detail in Section~\ref{sec:assumptions_and_properties}.
From Eq.~\eqref{eq:model_matrix_notation_1}, we can see the log-rates of the Poisson distribution being defined as a transformation of the baseline intercepts, using a matrix $\left(\mathbb{I} - \bS_{m,t}/N\right)^{-1}$ which encodes the effects of the network interactions.\\

As an additional layer for our model structure, we further characterize the similarity score between nodes by introducing a latent space projection model \citep{hoff2002latent} to explain the \textit{hidden} network interactions. 
This novel hierarchical formulation is inspired by recent contributions in the context of multivariate time series analysis \citep{zhu2020gnar,casarin2026bayesian}, but also by the literature on spatial hidden Markov models \citep{pasanen2026hidden,douwes2025three}.
For a set of latent positions $\left\{\bz_{m,t,i}\in \mathbb{R}^2 |\   m = 1,\dots, M;\   i = 1,\dots, N;\   t = 1,\dots, T_m\right\}$, we set the similarity score as
\begin{equation}\label{eq:def_S_matrix}
    S_{m,t,i,j} = \begin{cases}
        0 & \mbox{if} \quad i = j \\
        \bz_{m,t,i} \cdot \bz_{m,t,j} & \mbox{otherwise};
    \end{cases}
\end{equation}
for all $i$, $j$, $t$, $m$. This specification allows us to express various types of interactions between the brain regions. Letting $\theta_{m,t,i,j} \in [0,2\pi]$ denote the angle created by the vectors $\bz_{m,t,i}$ and $\bz_{m,t,j}$, then the dot product can be expressed as:
\begin{equation}\label{eq:def_z}
\bz_{m,t,i} \cdot \bz_{m,t,j} = \|\bz_{m,t,i}\|\|\bz_{m,t,j}\|\cos(\theta_{m,t,i,j}),
\end{equation}
where $\|\cdot\|$ is the Euclidean norm of a vector. From Eq.~\eqref{eq:def_z}, we can deduce that two nodes pointing in the same direction will tend to mutually excite, whereas any two nodes pointing in opposite directions will tend to inhibit each other's activities. Moreover, two perpendicular directions grant a dot product of zero, meaning that the two corresponding brain regions have no influence on each other. Another consideration regards the distance of nodes from the center of the space: the further away a node is located, the more its interactions with other nodes will be strengthened, either positively or negatively.

With the above model structure, we characterize the activity of a brain region at each point in time, combining a baseline spontaneous rate $\bmu_{m,t}$, and a network effect which is determined by the concurrent states of the other brain regions $\bz_{m,t}$. Crucially, both the spontaneous rates and the network effects may be allowed to change over time and across trials. Using this framework, we are able to model the dynamics of the brain activity network during the experiment, disentangling the evolution of the spontaneous rate from the network-induced rates. This makes it possible for us to highlight temporal changes, systemic changes and pattern shifts throughout the experiment with a model-based approach.

\subsection{Assumptions and properties}\label{sec:assumptions_and_properties}
Differently from recent work at the interface of time series and networks \citep{zhu2020gnar, casarin2026bayesian}, our formulation does not rely on a standard vector autoregressive (VAR) structure \citep{lutkepohl2005stablevar}, since the interactions are not time-delayed. On the other hand, our proposed simultaneous interaction mechanism shares notable similarities with the network literature on centrality measures, and in particular with the notion of alpha centrality \citep{bonacich1987power,bonacich2001eigenvectorlike}. In alpha centrality, a node's centrality is defined through a fixed-point relation in which the node's score depends on the scores of its neighbors \citep{bonacich1987power}:
\begin{equation}\label{eq:alpha_centrality}
    \gamma_{i}' = \mu' + \alpha'\sum_{j=1}^N S'_{ij}\gamma'_{j},
\end{equation}
where $\gamma'_i$ denotes the \emph{centrality score} of node $i$, $S'$ is a network adjacency matrix, $\mu'$ is a baseline term, and $\alpha'$ controls the magnitude of the network effect. We emphasize that this equation is the \emph{definition} of alpha centrality from the literature on network centrality measures, and it is included here only to motivate our construction. To reflect this, in Eq.~\eqref{eq:alpha_centrality} we intentionally use a notation that is only slightly different from that of Eq.~\eqref{eq:model_2}, with the dash sign to separate this equation from our model's.
In fact, in Eq.~\eqref{eq:model_2}, we define an analogous relationship in which each brain region has a spontaneous activity rate $\mu_{m,t,i}$, corrected by a network effect that would correspond to the choice of $\alpha = \frac{1}{N}$.

More in general, eigenvector-based centrality measures are routinely used as a summary tool to determine the importance of nodes in a network. The importance scores generally reflect the ``hubness" of a node and its influence on other nodes. 
Such measures have also been used, empirically, in brain networks \citep{bullmore2009complex,rubinov2010complex}, validating their high relevance in this particular applied context. For example, eigenvector centrality has been used in neuroimaging to identify influential regions in functional connectivity graphs \citep{lohmann2010eigenvector}. These applied works strongly support our use of an eigenvector-based measure for node activity as a principled starting point for modeling network-induced effects. \\

We remark that alpha centrality is well defined only if $\alpha$ is smaller than the inverse spectral radius of $S$. We thus introduce the following assumption.
\begin{assumption}\label{asmp:alpha}
The latent positions $\bz$ are constrained within the unit circle, i.e., $\|\bz_{m,t,i}\| < 1$ for $m = 1,\dots, M,\   t = 1,\dots, T_m, \  i = 1,\dots, N$.
\end{assumption}

Under this assumption, we can state the following proposition, which ensures that the model is well-defined for any choice of the parameters.
\begin{proposition}\label{prop:gamma_existence}
    If Assumption~\ref{asmp:alpha} holds, the parameters $\bgamma_{m,t}$ in \eqref{eq:model_2} are finite for any $\bmu_{m,t} \in \mathbb{R}^N$ and any $\bz_{m,t} \in \mathbb{R}^{N\times 2}$, for all $ m = 1,\dots, M,\   t = 1,\dots, T_m,$ and $ i = 1,\dots, N$.
\end{proposition}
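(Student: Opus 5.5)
The plan is to show that the matrix $\mathbb{I} - \bS_{m,t}/N$ is invertible whenever Assumption~\ref{asmp:alpha} holds. The fixed-point equation \eqref{eq:model_matrix_notation_1} then has the unique solution \eqref{eq:model_matrix_notation_2}, and this solution is a finite vector for every $\bmu_{m,t}\in\mathbb{R}^N$. Since $m$ and $t$ are fixed throughout, I drop them from the notation. The natural sufficient condition is that the spectral radius satisfies $\rho(\bS/N)<1$. In that case no eigenvalue of $\bS/N$ equals $1$, so $\mathbb{I}-\bS/N$ is nonsingular. Equivalently, the Neumann series $\sum_{k\ge 0}(\bS/N)^k$ converges to the inverse, which gives the alpha-centrality reading of $\bgamma$ mentioned in the discussion preceding Assumption~\ref{asmp:alpha}.

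To bound the spectral radius, I would use any induced matrix norm, since $\rho(A)\le\|A\|$. The simplest choice is the maximum absolute row sum $\|\cdot\|_\infty$. By the Cauchy--Schwarz inequality applied to \eqref{eq:def_z}, every off-diagonal entry satisfies $|S_{i,j}| = |\bz_i\cdot\bz_j| \le \|\bz_i\|\|\bz_j\| < 1$, and the diagonal entries are zero by \eqref{eq:def_S_matrix}. Each row of $\bS/N$ therefore has absolute sum at most $\frac{1}{N}\sum_{j\ne i}\|\bz_i\|\|\bz_j\| < \frac{N-1}{N} < 1$. This gives $\rho(\bS/N)\le\|\bS/N\|_\infty<1$.

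A cleaner alternative uses the structure of $\bS$. Writing $Z\in\mathbb{R}^{N\times 2}$ for the matrix whose rows are the $\bz_i$, we have $\bS = ZZ^\top - D$ with $D=\mathrm{diag}(\|\bz_i\|^2)$. The matrix $\bS$ is symmetric, and $\lambda_{\max}(\bS)\le\lambda_{\max}(ZZ^\top)=\lambda_{\max}(Z^\top Z)\le \mathrm{tr}(Z^\top Z)=\sum_i\|\bz_i\|^2<N$. Hence every eigenvalue of $\bS/N$ is strictly below $1$, which is exactly what invertibility of $\mathbb{I}-\bS/N$ requires. I would present the row-sum argument as the main proof and mention this one as a remark.

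Finally, with invertibility established, $\bgamma = (\mathbb{I}-\bS/N)^{-1}\bmu$ is a finite linear image of $\bmu$, and it is the unique solution of \eqref{eq:model_2}. There is no real obstacle here. The one point needing care is the phrase ``any $\bz_{m,t}\in\mathbb{R}^{N\times 2}$'' in the statement, which must be read as any configuration satisfying Assumption~\ref{asmp:alpha}. I would also note that the strict inequality $\|\bz_i\|<1$ leaves slack: even the bound $\|\bz_i\|\le 1$ would suffice, because of the factor $(N-1)/N$.
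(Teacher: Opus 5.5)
Your proposal is correct and follows the paper's route. Both arguments reduce invertibility of $\mathbb{I}-\bS_{m,t}/N$ to the condition $\nu(\bS_{m,t})<N$ and verify it through $\nu(A)\le\|A\|$; the paper bounds the Frobenius norm, $\|\bS_{m,t}\|_F<N$, whereas your maximum row-sum norm gives the slightly sharper bound $(N-1)/N$ for $\bS_{m,t}/N$, and your remark via $\bS=ZZ^\top-D$ is a valid extra that the paper does not use.
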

\begin{proof}
    From Eq.~\eqref{eq:model_matrix_notation_2}, the parameters $\bgamma_{m,t}$ are well defined when the matrix $\mathbb{I} - \bS_{m,t}/N$ is invertible. This holds if and only if $\det\left(\mathbb{I} -\bS_{mt}/N\right) \neq 0$. We note that
    \begin{equation}
        \det\left(\mathbb{I} - \frac{1}{N}\bS_{m,t}\right) 
        = N^{-N} \det\left(N\mathbb{I} - \bS_{m,t}\right) 
        = (-1)^N N^{-N} \det\left(\bS_{m,t} - N\mathbb{I}\right)
        = (-1)^N N^{-N}  \psi_{\bS_{m,t}}\left( N \right),
    \end{equation}
    where $\psi_{\bS_{m,t}}\left( N \right)$ is the characteristic polynomial of $\bS_{m,t}$, evaluated at $N$. Thus, the matrix is invertible if $N$ differs from every eigenvalue of $\bS_{mt}$. This can be guaranteed with the condition 
    \begin{equation}\label{eq:proof_3}
    \nu(\bS_{m,t}) < N
    \end{equation}
    where $\nu(\bS_{m,t})$ indicates the spectral radius of $\bS_{m,t}$, i.e. the largest absolute value of its eigenvalues.

    We use now the general result that $\nu(\bS_{m,t}) \leq \|\bS_{m,t}\|$ for any matrix norm $\|\cdot\|$. In particular, if we consider the Frobenius norm of $\bS_{mt}$, this satisfies:
    $$
    \|\bS_{m,t}\|_F = \sqrt{\sum_{i,j}|S_{m,t,i,j}|^2} \leq \sqrt{\sum_{i,j}|\bz_{m,t,i} \cdot \bz_{m,t,j}|^2} < \sqrt{\sum_{i,j}|1|^2} = N
    $$
    thanks to Assumption~1. As a consequence, the inequality of Eq.~\eqref{eq:proof_3} holds true and thus the matrix is invertible.
\end{proof}

\subsection{Prior distributions}\label{sec:prior}

The temporal framework of the dataset that we analyze has a dual aspect. On the one hand, we have a sequence of trials that are executed in sequence. Thus, the neuronal activity may exhibit temporal dynamics that are best interpreted by considering trajectories over trials. For example, the mice may start exhibiting signs of fatigue as the experiment progresses, so this pattern could be captured by temporal dynamics that compare the baseline neuronal activity at the start of each trial. On the other hand, the parameters of the model may clearly change also within each trial, in direct response to the stimuli, feedback, or for any other trial-specific reason.
In order to provide a model-based representation of both of these temporal dynamics, we disentangle the temporal evolution using two hidden Markov model frameworks: one for the changes within each trial, and one for the changes in the starting condition of each trial.

In practice, we assume random walk priors on all parameters, ensuring that the changes over time are controlled by suitable standard deviation parameters. An important advantage of this prior is that it easily allows one to control how persistent the parameters may be over time, thus permitting easier interpretations and visualizations of the results.
The diagram in Figure \ref{fig:prior_diagram_mu} exemplifies this nested random walk structure for the intercepts, noting that the latent positions also follow the same dependency structure.

\begin{figure}[!h]
    \centering
    \resizebox{\textwidth}{!}{%
    \begin{tikzpicture}[
        font=\small,
        >=Latex,
        node distance=1.7cm,
        box/.style={draw, rounded corners=8pt, minimum height=10mm, minimum width=15mm, inner sep=4pt},
        plus/.style={draw, circle, inner sep=0pt, minimum size=4.5mm}
    ]
        % Row labels
        \node[anchor=east] (lab1) at (-1.6, 2.4) {Trial $m-1$};
        \node[anchor=east] (lab2) at (-1.6, 0.0) {Trial $m$};
        \node[anchor=east] (lab3) at (-1.6,-2.4) {Trial $m+1$};

        % --- Row m-1 ---
        \node[box] (a11) at (0, 2.4) {$\mu_{m-1,1,i}$};
        \node[plus] (p11) at (1.7, 2.05) {$+$};
        \node[box] (a12) at (3.2, 2.4) {$\mu_{m-1,2,i}$};
        \node[plus] (p12) at (4.8, 2.05) {$+$};
        \node (dots1) at (6.1, 2.4) {$\cdots$};
        \node[plus] (p1T) at (7.5, 2.05) {$+$};
        \node[box] (a1T) at (9.6, 2.4) {$\mu_{m-1,T_{m-1},i}$};

        \draw[->, thick] ($(a11.east)+(0.2,0)$) -- node[above] {$\mathcal{N}(0,\eta_w)$} (a12.west);
        \draw[->, thick] ($(a12.east)+(0.2,0)$) -- node[above] {$\mathcal{N}(0,\eta_w)$} ($(dots1.west)+(-0.2,0)$);
        \draw[->, thick] ($(dots1.east)+(0.2,0)$) -- node[above] {$\mathcal{N}(0,\eta_w)$} (a1T.west);

        % --- Row m ---
        \node[box] (b11) at (0, 0.0) {$\mu_{m,1,i}$};
        \node[plus] (q11) at (1.7,-0.35) {$+$};
        \node[box] (b12) at (3.2, 0.0) {$\mu_{m,2,i}$};
        \node[plus] (q12) at (4.8,-0.35) {$+$};
        \node (dots2) at (6.1, 0.0) {$\cdots$};
        \node[plus] (q1T) at (7.5,-0.35) {$+$};
        \node[box] (b1T) at (9.6, 0.0) {$\mu_{m,T_{m},i}$};

        \draw[->, thick] ($(b11.east)+(0.2,0)$) -- node[above] {$\mathcal{N}(0,\eta_w)$} (b12.west);
        \draw[->, thick] ($(b12.east)+(0.2,0)$) -- node[above] {$\mathcal{N}(0,\eta_w)$} ($(dots2.west)+(-0.2,0)$);
        \draw[->, thick] ($(dots2.east)+(0.2,0)$) -- node[above] {$\mathcal{N}(0,\eta_w)$} (b1T.west);

        % --- Row m+1 ---
        \node[box] (c11) at (0,-2.4) {$\mu_{m+1,1,i}$};
        \node[plus] (r11) at (1.7,-2.75) {$+$};
        \node[box] (c12) at (3.2,-2.4) {$\mu_{m+1,2,i}$};
        \node[plus] (r12) at (4.8,-2.75) {$+$};
        \node (dots3) at (6.1,-2.4) {$\cdots$};
        \node[plus] (r1T) at (7.4,-2.75) {$+$};
        \node[box] (c1T) at (9.6,-2.4) {$\mu_{m+1,T_{m+1},i}$};

        \draw[->, thick] ($(c11.east)+(0.2,0)$) -- node[above] {$\mathcal{N}(0,\eta_w)$} (c12.west);
        \draw[->, thick] ($(c12.east)+(0.2,0)$) -- node[above] {$\mathcal{N}(0,\eta_w)$} ($(dots3.west)+(-0.2,0)$);
        \draw[->, thick] ($(dots3.east)+(0.2,0)$) -- node[above] {$\mathcal{N}(0,\eta_w)$} (c1T.west);

        % Vertical links across trials (initial conditions)
        \node[plus] (v12) at (0.35, 1.25) {$+$};
        \node[plus] (v23) at (0.35,-1.15) {$+$};
        \draw[->, thick] (a11.south) -- node[left] {$\mathcal{N}(0,\eta_a)$} (b11.north);
        \draw[->, thick] (b11.south) -- node[left] {$\mathcal{N}(0,\eta_a)$} (c11.north);

        % Optional continuation arrows (top and bottom)
        \draw[->, thick] (a11.north) -- ++(0,0.8);
        \draw[->, thick] (c11.south) -- ++(0,-0.8);
    \end{tikzpicture}%
    }
    \caption{Nested hidden Markov prior structure on the parameters of the model. Each row corresponds to one trial; within a row, the boxes denote the intercepts at consecutive within-trial time points $t=1,\dots,T_m$ for brain area $i$, linked horizontally by Gaussian increments of variance $\eta_w$ from Eq.~\eqref{eq:mu_priors_1b}. The starting conditions $\mu_{m,1,i}$ of consecutive trials are linked vertically by Gaussian increments of variance $\eta_a$ from Eq.~\eqref{eq:mu_priors_1a}, so that slow dynamics propagate across trials. The same dependency structure is assumed for the latent positions $\bz_{m,t,i}$, with their own pair of within-trial and across-trial hyperparameters $(\zeta_w, \zeta_a)$ and $(\omega_w, \omega_a)$.}
    \label{fig:prior_diagram_mu}
\end{figure}

The random walk prior on the starting conditions of the trials implies that the intercept and the latent positions may not vary much across trials, reflecting the fact that the trials are repeated under similar conditions with a well-defined sequence of events and similar timings. However, this type of prior may also help in capturing a trend component, which could reflect fatigue of the mouse or an improvement in how the task is performed. 
At the end of each trial, a wash-out period with randomized duration was implemented to allow neurons to return to spontaneous states, where the randomized duration avoids anticipation. Our assumption of a Markovian sequence on initial conditions is thus consistent with the study design.

The random walk prior on within-trial parameter values is motivated by the presence of stimuli during the trial. It allows us to capture temporal changes associated with critical moments of the trial.\\

More formally, we define the priors for the intercept vectors $\bmu_{m,t} \in \mathbb{R}^N$ as, for $m=1,2,\ldots, M-1$,
\begin{equation}\label{eq:mu_priors_1a}
\begin{split}
    &\bmu_{m+1,1} \sim \mathcal{N}\left(\bmu_{m,1}, \eta_{a}\,\mathbb{I}_N\right)
\end{split}
\end{equation}
and for $m=1,2,\ldots, M$, $t=1,2,..., T_m-1$
\begin{equation}\label{eq:mu_priors_1b}
\begin{split}
    &\bmu_{m,t+1} \sim \mathcal{N}\left(\bmu_{m,t}, \eta_{w}\,\mathbb{I}_N\right),
\end{split}
\end{equation}
where $\eta_{a}$ and $\eta_{w}$ represent the variances for the increments of the intercepts across and within trials, respectively, and $\mathbb{I}_N$ denotes the $N\times N$ identity matrix. As concerns the latent positions, we use a parametrization in polar coordinates which allows us to easily characterize the constraint of Assumption~1:
\begin{equation*}
    \bz_{mti} = \frac{\exp(\rho_{m,t,i})}{\{1+\exp(\rho_{m,t,i})\}} \cdot \left(
    \begin{matrix}
        \cos(\theta_{m,t,i})\\ 
        \sin(\theta_{m,t,i})
    \end{matrix}\right)
\end{equation*}
where $\rho_{m,t,i} \in \mathbb{R}$ is the logit of the distance of a node to the center of the space, and $\theta_{m,t,i} \in [0,2\pi]$ is the angle that it forms with the horizontal axis, for every $m$, $t$ and $i$. Letting $\brho_{m,t} = (\rho_{m,t,1},\dots,\rho_{m,t,N})^\top$ and $\btheta_{m,t} = (\theta_{m,t,1},\dots,\theta_{m,t,N})^\top$, we specify, for $m=1,2,...,M-1$
\begin{equation}\label{eq:Z_prior_1a}
\begin{split}
    &\brho_{m+1,1} \sim \mathcal{N}\left(\brho_{m,1}, \zeta_{a}\,\mathbb{I}_N\right) \quad \forall m=1,\dots,M-1\\
    &\btheta_{m+1,1} \sim \mathcal{N}\left(\btheta_{m,1}, \omega_{a}\,\mathbb{I}_N\right) \quad \forall m=1,\dots,M-1\\
\end{split}
\end{equation}
and for $m=1,2,...,M-1, t=1,2,...,T_m-1$,
\begin{equation}\label{eq:Z_prior_1b}
\begin{split}
    &\brho_{m,t+1} \sim \mathcal{N}\left(\brho_{m,t}, \zeta_{w}\,\mathbb{I}_N\right)  \\
    &\btheta_{m,t+1} \sim \mathcal{N}\left(\btheta_{m,t}, \omega_{w}\,\mathbb{I}_N\right) \,.
\end{split}
\end{equation}
where $\zeta_{a}$ and $\omega_{a}$ indicate the variations across trials and $\zeta_{w}$ and $\omega_{w}$ indicate the variations within trials, respectively. The priors for the parameters indexed by $m=1$ and $t=1$ are chosen as improper uniform priors, e.g. densities that are perfectly flat over the whole real axis.

All of the variance parameters $\eta_{\cdot}$, $\zeta_{\cdot}$ and $\omega_{\cdot}$ are positive numbers that are chosen by the user. The rationale behind choosing these parameters is that high values make the model more flexible, because the parameters are able to move more over time, and so they can help in achieving a better model fit. By contrast, low values of the variance parameters lead to more clear interpretations and graphical visualizations. While this may sacrifice some of the flexibility of the model, it does promote parsimony and it makes it easier to describe any patterns that arise from the results. Within this context, we can see the variance parameters as regularization terms that let the user make a choice on the balance to strike between model fit and interpretability. 

An additional aspect to note here regards the identifiability of the model parameters. While the intercepts and the latent position parameters play very different roles in the model, there can still be near-non-identifiable configurations. For example, an increase in activity of a particular brain region may be explained both by an increasing intercept parameter, and by a latent position that tends to have positive interactions and gets pushed away from the center of the space. For these reasons, we do not recommend a framework where all parameters can vary across and within trials. Rather, we focus our attention on configurations where only one family of parameters can change across trials, and only one family of parameters can change within trials. As a technical note, the parameters that are not allowed to change are approximately fixed to the same value by choosing very small variances in their priors.

\section{Methods}\label{sec:methods}
\subsection{Markov chain Monte Carlo sampling}
We propose a parameter inference method based on Markov chain Monte Carlo sampling from the model's posterior distribution
\begin{equation}
    p\left(\bmu, \brho, \btheta \right) \propto \mathcal{L}_{\textbf{Y}}\left( \bmu, \brho, \btheta \right)\ p\left(\bmu \middle \vert \eta \right) \ p\left(\brho\middle \vert \zeta\right)\ p\left(\btheta\middle \vert \omega\right),
\end{equation}
where
\begin{equation}
    \mathcal{L}_{\textbf{Y}}\left( \bmu, \brho, \btheta \right) = \prod_{m,t,i} \left[\frac{\exp(-\lambda_{m,t,i}) \lambda_{m,t,i}^{y_{m,t,i}}}{y_{m,t,i}!}\right]
\end{equation}
is the Poisson likelihood function, and the remaining terms are priors
\begin{equation}
    \begin{split}
        &p\left(\bmu \middle \vert \eta \right) \propto \left\{\prod_{m=1}^{M-1} \prod_{i=1}^{N} \exp\left\{-\left(\mu_{m+1,1,i}-\mu_{m,1,i}\right)^2 / \eta_a\right\}\right]\left[\prod_{m=2}^{M} \prod_{t=1}^{T_m-1} \prod_{i=1}^{N} \exp\left\{-\left(\mu_{m,t+1,i}-\mu_{m,t,i}\right)^2 / \eta_w\right\}\right] \\
        &p\left(\brho \middle \vert \zeta \right) \propto \left[\prod_{m=1}^{M-1} \prod_{i=1}^{N} \exp\left\{-\left(\rho_{m+1,1,i}-\rho_{m,1,i}\right)^2 / \zeta_a\right\}\right]\left\{\prod_{m=2}^{M} \prod_{i=1}^{N} \prod_{t=1}^{T_m} \exp\left\{-\left(\rho_{m,t+1,i}-\rho_{m,t,i}\right)^2 / \zeta_w\right\}\right] \\
        &p\left(\btheta \middle \vert \omega \right) \propto \left[\prod_{m=1}^{M-1} \prod_{i=1}^{N} \exp\left\{-\left(\theta_{m+1,1,i}-\theta_{m,1,i}\right)^2 / \omega_a\right\}\right]\left[\prod_{m=2}^{M} \prod_{t=1}^{T_m} \prod_{i=1}^{N} \exp\left\{-\left(\theta_{m,t+1,i}-\theta_{m,t,i}\right)^2 / \omega_w\right\}\right] .
    \end{split}
\end{equation}

For a given configuration of model parameters $\left( \bmu, \brho, \btheta \right)$, the likelihood function is evaluated by first calculating the matrices $\bS_{m,t}$ for all $m$ and $t$, and then, using each of those matrices, the log-rates are calculated by iterating the fixed point equation as follows
\begin{equation}\label{eq:fixed_point_1}
    \gamma_{m,t,i} = \mu_{m,t,i} + \frac{1}{N}\sum_{j=1}^N S_{m,t,i,j}\gamma_{m,t,j},
\end{equation}
where we loop over $i=1,\dots,N$ repeatedly. In analogy to the so-called power method for the calculation of the eigenvalues of a matrix, this iterative procedure converges with few iterations (we set $100$ iterations as a conservative default), and its actual computing time is negligible with respect to the rest of the inferential procedure. Ultimately, this iterative step is numerically equivalent to the calculation of Eq.~\eqref{eq:model_matrix_notation_2}, which may in fact also be used as a (less computationally efficient) alternative. 

During the Markov chain Monte Carlo sampling, each individual parameter is updated in turn, using a Metropolis-Hastings step. However, another computational argument must be made here. Consider a generic latent position $\bz_{m,t,i}$, for some $m$, $t$ and $i$. Whenever the value of $\bz_{m,t,i}$ changes, this will affect all the values $\gamma_{m,t,\tilde{i}}$ for all $\tilde{i} = 1, \dots, N$, since all of the interactions of this node are used in the matrix inversion of Eq.~\eqref{eq:model_matrix_notation_2}. This implies that, even when updating a single latent position $\bz_{m,t,i}$, evaluating the resulting changes in   $\bgamma$ has quadratic computational cost. If we repeat this process for each parameter update, we end up with a prohibitive computational complexity. Hence, during the update of a generic position $\bz_{m,t,i}$, we evaluate all of the changes on the relevant matrix $\bS_{m,t}$, but then we only calculate Eq.~\eqref{eq:fixed_point_1} once for node $i$, with the rest of $\bgamma$ remaining unchanged. At the end of each loop of the sampler over all model parameters, we proceed with a full likelihood calculation to reset any error that might have accrued. While this creates an approximation in our procedure, we can monitor the error introduced at the end of each iteration through the full likelihood calculation. In practice, we find that the error introduced is immaterial, as the step size for the updates is minimal and thus any bias it creates appears completely negligible.  

\subsection{Post-processing}\label{sec:postprocess}
Since our framework relies on a latent projection model structure, the parameters are identifiable only up to rotations and reflections of the latent space. For completeness, we report this more formally in this section.

Without loss of generality, we focus on the common setting where nodes' positions in the latent space are all pointing in different directions. In other words, no two latent positions are collinear with the origin. This is essential to ensure a property of \textit{general position} of the matrix of latent positions, which justifies the following assumption.
\begin{assumption}
\label{asmp:general_position}
For every trial $m$ and time $t$, the latent positions $Z_{m,t}\in\mathbb{R}^{N\times 2}$ satisfy the following property: 
if $Z_{m,t},Z_{m,t}'\in\mathbb{R}^{N\times 2}$ both have rank $2$ and
\[
(Z_{m,t}Z_{m,t}^\top)_{ij}=(Z_{m,t}'(Z_{m,t}')^\top)_{ij},
\qquad \forall i,\ j \mbox{ such that } i\neq j,
\]
then
\[
Z_{m,t}Z_{m,t}^\top=Z_{m,t}'(Z_{m,t}')^\top.
\]
\end{assumption}
The assumption states that the off-diagonal terms of the Gram matrix are sufficient to determine the full Gram matrix. The assumption is motivated as its statement is a general known result for Gram matrices that follows from the onset described above. Then, we can proceed to state a proposition that formalizes the issue of non-identifiability.

\begin{proposition}\label{prop:identifiability}
Under Assumption~\ref{asmp:alpha} and Assumption~\ref{asmp:general_position}, let $Z_{m,t}, Z'_{m,t} \in \mathbb{R}^{N \times 2}$ be two configurations of latent positions stacked by row, both of rank $2$, for trial $m$ and time $t$. Let $\bS_{m,t}$ and $\bS'_{m,t}$ be the corresponding network interaction matrices, defined via Eq.~\eqref{eq:def_S_matrix}. Then $\bS_{m,t} = \bS'_{m,t}$ if and only if there exists a matrix $Q$ in the orthogonal group $O(2) = \{Q \in \mathbb{R}^{2 \times 2} : Q^\top Q = I_2\}$ such that $\bz'_{m,t,i} = Q\,\bz_{m,t,i}$ for every $i = 1, \dots, N$.
\end{proposition}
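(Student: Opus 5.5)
The plan is to prove the two implications separately. Throughout, $m$ and $t$ are fixed and the indices are dropped, so we write $Z, Z'$ for the $N\times 2$ matrices whose rows are $\bz_i^\top$ and $\bz_i'^\top$. By Eq.~\eqref{eq:def_S_matrix}, $\bS$ is the Gram matrix $G = ZZ^\top$ with its diagonal set to zero: $S_{ij} = G_{ij}$ for $i\neq j$ and $S_{ii}=0$. The same holds for $\bS'$ and $G' = Z'Z'^\top$.

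\emph{Sufficiency.} Suppose $\bz_i' = Q\bz_i$ for every $i$, with $Q^\top Q = I_2$. Then
\[
\bz_i'\cdot\bz_j' = \bz_i^\top Q^\top Q\,\bz_j = \bz_i\cdot\bz_j
\]
for all $i,j$. Hence $G' = G$, and in particular $\bS' = \bS$. This direction is immediate.

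\emph{Necessity.} Suppose $\bS = \bS'$. Then the off-diagonal entries of $G$ and $G'$ agree. Both $Z$ and $Z'$ have rank $2$, so Assumption~\ref{asmp:general_position} gives $G = G'$, i.e. $ZZ^\top = Z'Z'^\top$. The remaining step is the standard fact that two full-column-rank factorizations of the same Gram matrix differ by an orthogonal transformation. To construct $Q$, note that $Z^\top Z$ is $2\times 2$ and invertible because $\mathrm{rank}\,Z = 2$. Set
\[
Q^\top = (Z^\top Z)^{-1} Z^\top Z' .
\]
Multiplying $ZZ^\top = Z'Z'^\top$ on the left by $Z'^\top$ and using $\mathrm{rank}\,Z'=2$ shows that $Z'$ and $Z$ have the same column space. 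This gives $Z' = Z Q^\top$, since $Z(Z^\top Z)^{-1}Z^\top$ is the projection onto the column space of $Z$. Substituting back, $ZZ^\top = Z Q^\top Q Z^\top$. Multiplying on the left by $(Z^\top Z)^{-1}Z^\top$ and on the right by $Z(Z^\top Z)^{-1}$ yields $Q^\top Q = I_2$. Reading $Z' = ZQ^\top$ row by row gives $\bz_i' = Q\bz_i$.

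Assumption~\ref{asmp:alpha} plays no active role here; it only ensures that the configurations lie in the model's parameter space, so that $\bgamma$ is well defined by Proposition~\ref{prop:gamma_existence}.

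The main obstacle is the passage from equal off-diagonals to equal full Gram matrices, i.e. recovering the diagonal entries $\|\bz_i\|^2$. This is exactly what Assumption~\ref{asmp:general_position} supplies, so in the proof it is simply invoked. If one wanted to justify it, one would use that $G$ has rank $2$. Every $3\times 3$ minor of $G$ must vanish, and the minors on index sets $\{i,j,k\}$ give equations in the unknown diagonal entries. When no two positions are collinear with the origin, these equations should pin down each $G_{ii}$, provided $N\ge 3$. Beyond this, the column-space argument in the necessity step is routine linear algebra.
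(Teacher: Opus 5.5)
Your proof is correct and follows the paper's proof: sufficiency from $Q^\top Q=I_2$, and necessity by invoking Assumption~\ref{asmp:general_position} to get $ZZ^\top=Z'(Z')^\top$, then the uniqueness of full-column-rank Gram factorizations, which the paper only cites and you verify explicitly via $Q^\top=(Z^\top Z)^{-1}Z^\top Z'$. One caveat on your closing aside, which your proof does not rely on: $N\ge 3$ is not enough to recover the diagonal, because for $N\le 4$ the $2N-1$ degrees of freedom of $Z'$ modulo $O(2)$ exceed the $N(N-1)/2$ off-diagonal constraints, so generic configurations admit a continuum of non-equivalent $Z'$; the right tool is instead, for $N\ge 5$, the non-principal $3\times 3$ minors on rows $\{i,j,k\}$ and columns $\{i,p,q\}$, which are linear in $G_{ii}$ with coefficient equal to the product of the $2\times 2$ minors of $Z$ on rows $\{j,k\}$ and $\{p,q\}$, and this coefficient is nonzero when no two positions are collinear with the origin.
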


\begin{proof}
In this proof, we drop the fixed subscripts $(m,t)$ for ease of notation. We define $G = ZZ^\top$ and $G' = Z'(Z')^\top$. From the definitions, the off-diagonal entries of $G$ and $G'$ coincide with those of $\bS$ and $\bS'$, whereas the diagonal entries are $G_{ii} = \|\bz_i\|^2$ and $G'_{ii} = \|\bz'_i\|^2$ and the diagonal entries of $\bS'$ and $\bS$ are all zeros by definition. 

\emph{Sufficiency.} If $\bz'_i = Q\bz_i$ for some $Q \in O(2)$, then, for every pair $(i, j)$, $(\bz'_i)^\top \bz'_j = \bz_i^\top Q^\top Q \bz_j = \bz_i^\top \bz_j$, so $G=G'$. Therefore, recalling the relationships between $G, G'$ and $\bS, \bS'$, we deduce that $\bS' = \bS$.

\emph{Necessity.} By the definition of $\bS$ and $\bS'$, the off-diagonal entries of $G$ and $G'$ coincide. Since $Z$ and $Z'$ have rank $2$, $G$ and $G'$ are positive semi-definite matrices and their off-diagonal entries uniquely determine the diagonals, thanks to Assumption~\ref{asmp:general_position}. Hence $G=G'$, and thus $ZZ^\top=Z'(Z')^\top$. Since $Z$ and $Z'$ both have full column rank, the standard uniqueness result for full-rank Gram factorizations implies that there exists a $Q \in O(2)$ such that $\bz'_i = Q \bz_i$.

\end{proof}

% \textcolor{red}{I don't know if this statement is true ``By the full-rank assumption, the off-diagonal entries of a rank-two positive semi-definite matrix uniquely determine its diagonal". Below is a counterexample from codex..}

% \paragraph{Counterexample to diagonal recovery under full rank.}
% Full rank of the interaction matrix is not sufficient to ensure that the
% off-diagonal entries of a rank-two positive semidefinite Gram matrix uniquely
% determine its diagonal. Consider the two latent configurations
% \[
% Z=
% \frac{1}{3}
% \begin{pmatrix}
% 1 & 0\\
% 1 & 1\\
% 1 & 0
% \end{pmatrix},
% \qquad
% Z'=
% \frac{1}{6}
% \begin{pmatrix}
% 4 & 0\\
% 1 & 1\\
% 1 & 3
% \end{pmatrix}.
% \]
% Their Gram matrices are
% \[
% ZZ^\top=
% \begin{pmatrix}
% \frac{1}{9} & \frac{1}{9} & \frac{1}{9}\\
% \frac{1}{9} & \frac{2}{9} & \frac{1}{9}\\
% \frac{1}{9} & \frac{1}{9} & \frac{1}{9}
% \end{pmatrix}
% \]
% and
% \[
% Z'(Z')^\top=
% \begin{pmatrix}
% \frac{4}{9} & \frac{1}{9} & \frac{1}{9}\\
% \frac{1}{9} & \frac{1}{18} & \frac{1}{9}\\
% \frac{1}{9} & \frac{1}{9} & \frac{5}{18}
% \end{pmatrix}.
% \]

% \textcolor{red}{End of example.}

As a clarification, we highlight that, unlike in a Euclidean-distance-based latent space model, a translation $\bz \mapsto \bz + \textbf{c}$ would not cause non-identifiability issues for our model. In fact, the dot product $(\bz_i + \textbf{c})^\top (\bz_j + \textbf{c}) = \bz_i^\top \bz_j + (\bz_i + \bz_j)^\top \textbf{c} + \|\textbf{c}\|^2$ depends on $\textbf{c}$, hence, a translation has a direct effect on the value of the model likelihood.

The identifiability issue outlined in Proposition \ref{prop:identifiability} may not be concerning in an optimization setting: multiple optimal solutions would lead to exactly the same interpretations and conclusions. However, in a Bayesian sampling setting, the posterior draws that we obtain for each of the latent positions can be difficult to summarize. During the sampling, the latent space may undergo rotations or reflections which we are not able to detect. Common model-based summaries, such as posterior means, will be directly affected by this, making the results difficult to interpret.
This non-identifiability is a common and well known issue for latent position models \citep{hoff2002latent, shortreed2006positional, LPNM_2023}, which fortunately has a rather elegant and effective solution. In order to make the samples for the latent spaces interpretable, we use Procrustes matching, whereby each latent space is rotated clockwise or anticlockwise until a best match is found with the maximum a posteriori configuration. 
Once all sampled latent spaces are rotated to best match the configuration, they can be considered comparable, and common summary tools, such as posterior means, can be used. 

We point out an additional technical note. In the projection model, the posterior distribution of a node may resemble an arc, since the behavior of the node is directly represented by the angles that it forms with the other nodes. In such cases, its posterior mean calculated on the cartesian coordinates would naturally tend to lean towards the center of the space. To avoid this type of unwanted behavior, we construct the posterior averages for the latent positions by using the posterior means of their polar coordinates: in this way, the estimate of a latent position is constructed using its average distance from the center and its average angle with respect to the $x$-axis. This guarantees that the distance from the center of the space is properly gauged.

\section{Simulations}\label{sec:simulation}
In order to validate our methodology, we propose two simulation studies on artificial data. The first simulation study considers a large number of simulated datasets that have been generated under increasing numbers of trials $M$,  time points $T_m$, and  nodes $N$. This first study aims at demonstrating that the accuracy of the results is generally good even for relatively small datasets, and that the recovery of the model structure tends to improve as the number of nodes increases, or as the number of trials and times increase. One drawback for this study is that it requires generating and fitting many datasets, which is very time consuming for an inferential procedure based on sampling. So, we only run it on small datasets. The second simulation complements the first study by considering a single artificial dataset which is much larger in size, and similar to the one used in the real application that we analyze in Section \ref{sec:rda}. The simulation study gives evidence that the results are reliable for larger datasets, and that the latent space is estimated accurately for datasets similar to the real ones that we study.

\subsection{Study 1}\label{sec:simulation1}
The number of nodes varies in the set $\{4,\ 8,\ 12,\ 16\}$ and the number of trials varies in the set $\{5,\ 10,\ 15,\ 20\}$. For each setting, the number of time points within each trial is always set equal to the number of trials, and therefore also varies over $\{5,\ 10,\ 15,\ 20\}$. Under each combination of number of nodes and number of trials, we generate simulated datasets, and fit our model.
The intercept parameters $\mu_{m,t,i}$ are chosen as equally spaced values from $1$ to $5$ for $m = t = 1$. The angle parameters $\theta_{m,t,i}$ are also chosen as equally spaced values from $0$ to $2\pi$ for $m = t = 1$. These initial values of the parameters are thus set using a regular pattern, which is expected to give reasonable datasets that present a good amount of heterogeneity in how the nodes behave. The decision of generating parameters using a regular pattern is essential for small datasets, so that for all choices of $N$ and $M$ we obtain similar model structures and thus we can compare results across the setups. In fact, one could choose completely random initial values, but this would lead to some datasets that are meaningless, in that nodes would not display a wide enough range of behaviors. In such case, one would need to average out the results over a large number of repeated experiments, which is computationally impractical. By imposing a predetermined range for the intercepts and the angles, we can ensure that the one dataset that we generate will always show a realistic scenario and a pre-determined sequence of nodes' behaviors.

As regards the evolutions of these parameters over time, we consider first-order autoregressive processes with a drift, for all of them. A generic parameter $\phi_{m,t,i}$ varies as follows:
$$
\phi_{m+1,t,i} = \frac{1}{2}\phi_{1,1,i} + \mathcal{N}\left(\frac{1}{2}\phi_{m,t,i}, 0.1^2\right),
$$
$$
\phi_{m,t+1,i} = \frac{1}{2}\phi_{1,1,i} + \mathcal{N}\left(\frac{1}{2}\phi_{m,t,i}, 0.1^2\right),
$$
across and within trials, respectively. Here, a generic notation $\phi$ is used in place of either parameter $\mu$ or $\theta$.
Motivated by a similar argument that we have highlighted for the regular patterns above, the idea behind the stationary autoregressive processes is that we want to ensure that model generates relevant datasets that do not degenerate over time.

The parameters $\rho_{m,t,i}$ are set to $0.5$ for all nodes, times, and trials. This choice ensures that the network effect remains relevant to the same extent in all the components of the datasets, so that the difficulty in recovering the latent space remains comparable throughout a dataset and across datasets.

We run our algorithm under each setting of $N$ and $M$. The hyperparameters $\eta, \zeta, \omega$ are all set equal to $0.1^2$, to impose a strong prior that reduces the changes of parameters across times. The Markov chain Monte Carlo samplers are run for $90000$ iterations, where the last $9000$ iterations are used to create the final posterior sample by retaining every $30$th sample. These settings are intentionally chosen very conservatively since convergence was not formally checked for every dataset.
The actual posterior sample size that we use to construct estimates for each parameter is equal to $300$.  

Figure \ref{fig:sim_1a_scatters_mu} represents the true values of $\mu_{m,t,i}$ against their corresponding estimates under each setting considered. 
\begin{figure}[htbp!]
\begin{center}
 \includegraphics[width=0.99\textwidth]{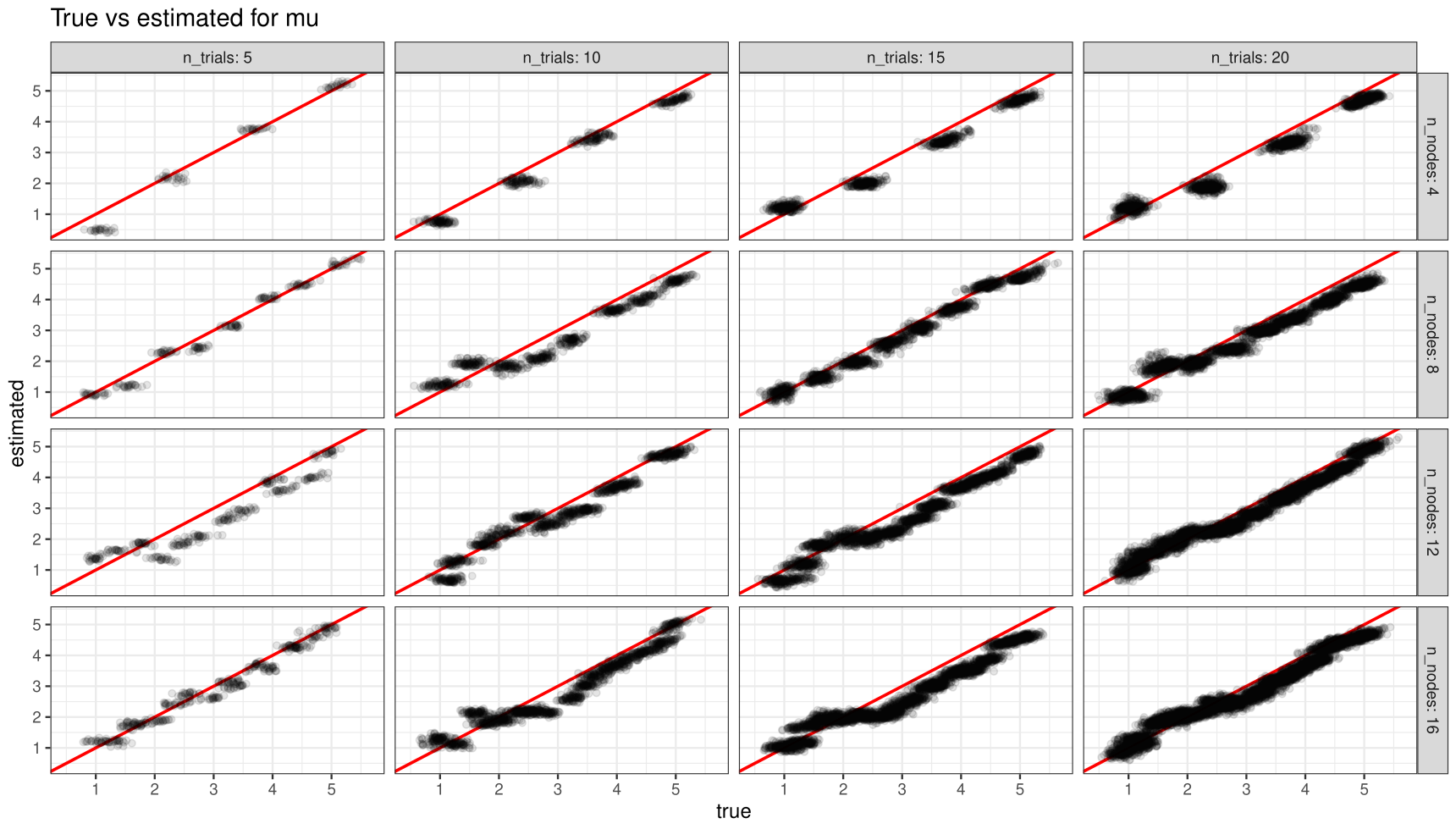}
 \caption{Simulation 1. Generated values of the intercept parameter against its estimated posterior mean, for each setting considered. Estimates are fairly accurate across the board, with minimal bias as the number of nodes and trials increase.}
 \label{fig:sim_1a_scatters_mu}
\end{center}
\end{figure}
The figure illustrates that the intercept parameters are generally well recovered in all settings, with results improving as the number of nodes and trials increase.
Figure \ref{fig:sim_1a_scatters_lambda} shows a very analogous result for the log-rates, highlighting an excellent model fit, overall. 
\begin{figure}[htbp!]
\begin{center}
 \includegraphics[width=0.99\textwidth]{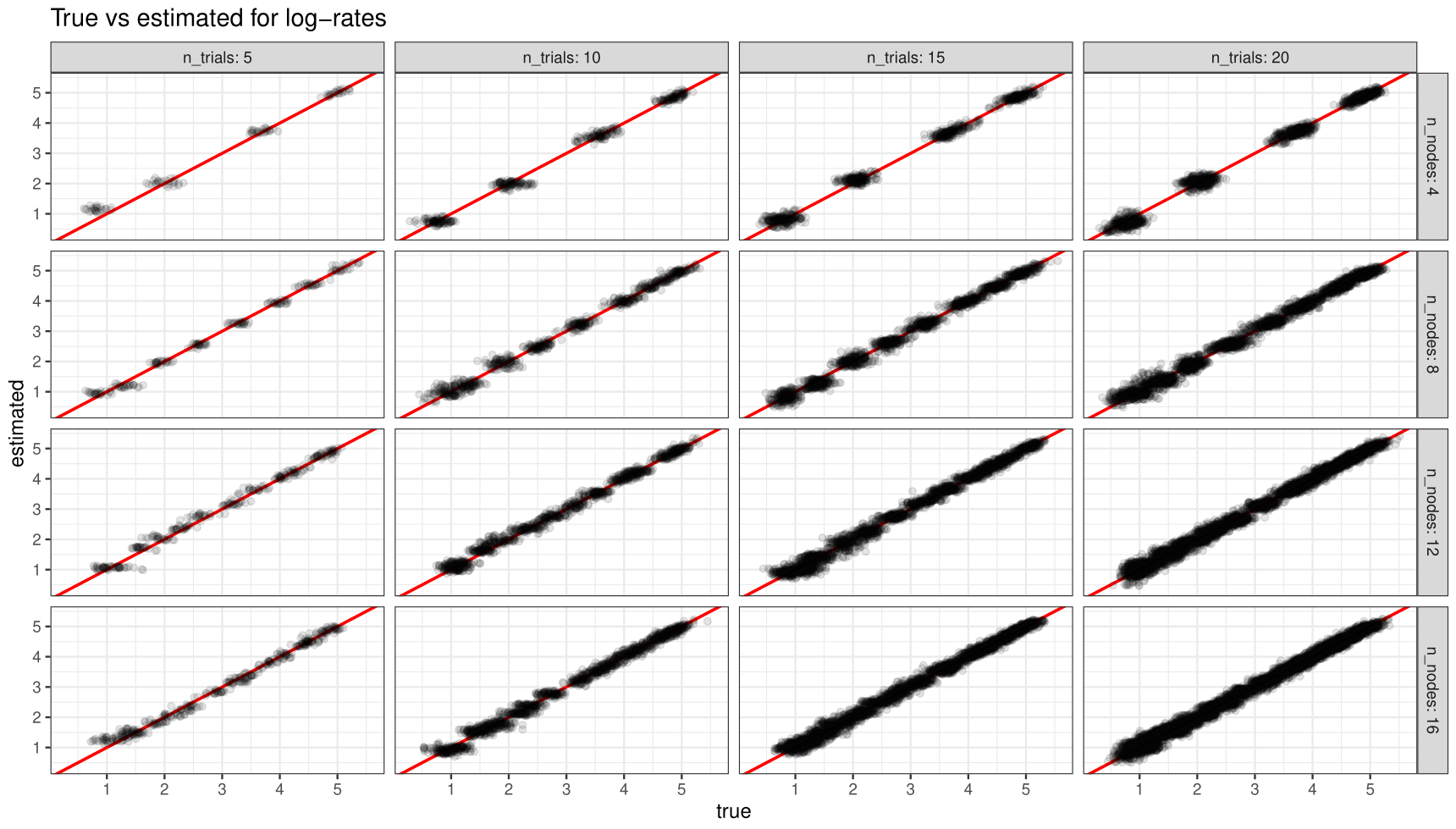}
 \caption{Simulation 1. Generated values of the log-rates against their estimated posterior mean, for each setting considered. Model fit is excellent for all datasets.}
 \label{fig:sim_1a_scatters_lambda}
\end{center}
\end{figure}
These two figures demonstrate that the inferential procedure is able to fit the data very well, and that the intercept parameters are accurately estimated. However, a critical element of our model is the latent space, which involves a substantial number of additional  parameters, and thus may be more challenging to estimate.
Figure \ref{fig:sim_1a_scatters_gamma} presents another figure in the same style for the interactions between nodes, as measured by their dot products.
\begin{figure}[htbp!]
\begin{center}
 \includegraphics[width=0.99\textwidth]{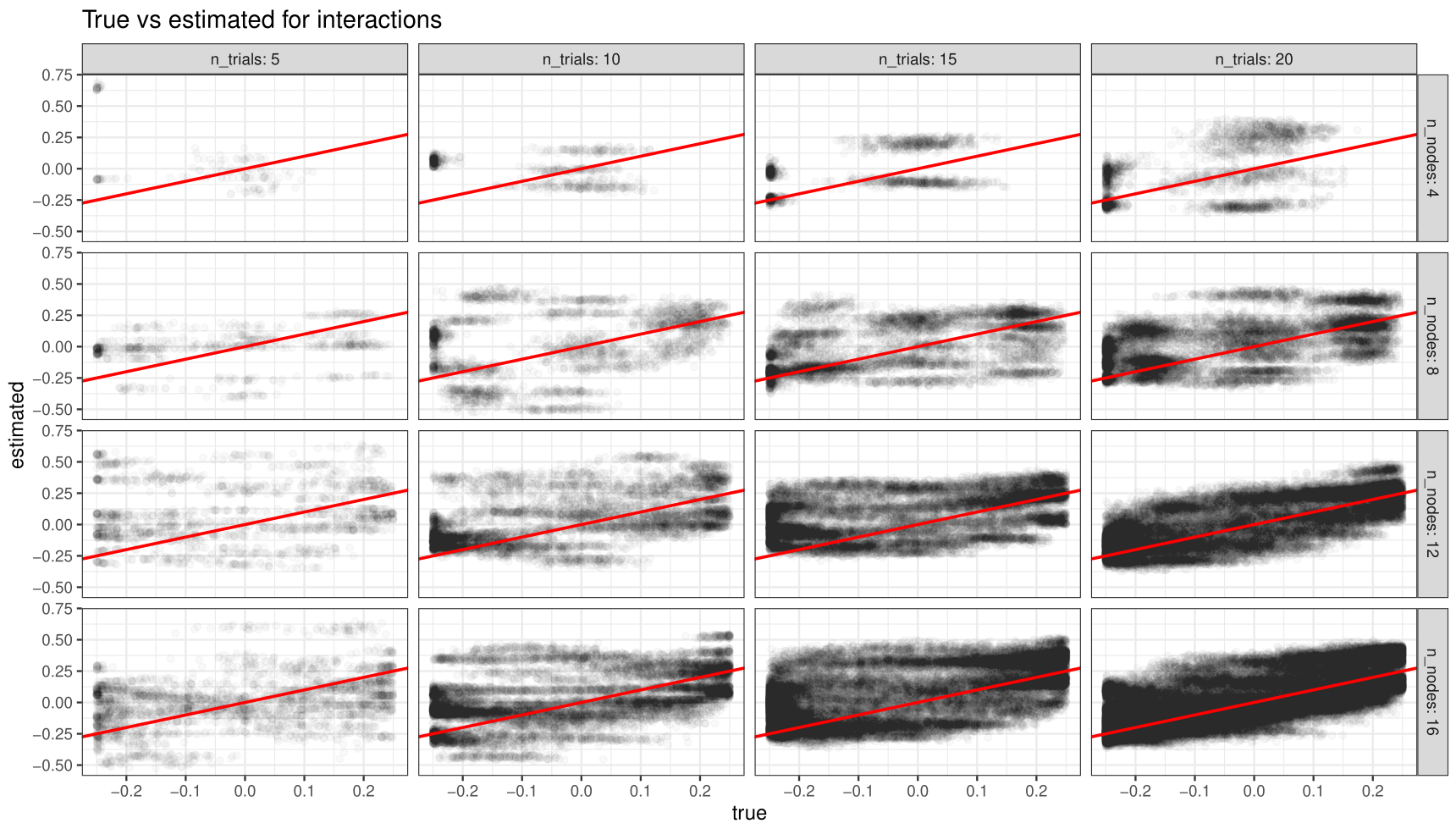}
 \caption{Simulation 1. Generated values of the dot products against their estimated posterior mean, for each setting considered. While the uncertainty remains large, the estimation of the network effects tends to improve as the number of nodes and trials increase.}
 \label{fig:sim_1a_scatters_gamma}
\end{center}
\end{figure}
In this case, the figure shows much larger errors. However, accuracy tends to improve as the numbers of nodes and trials increase.
The results indicate that the network effects are well captured by the model, even in the case of relatively small datasets.
While some errors can be present when recovering the latent space, these do not seem to lead to a worse fit of the model.
The figures in combination provide evidence that the method can successfully disentangle the baseline intercepts from the latent space network effect, estimating both parts of the model efficiently. 

% As mentioned, we also ran the experiment with all hyperparameters set to $0.1^2$, which is more representative of the value used to generate some of the parameters. In this case, the results for the baseline intercepts and log-rates were comparable to those of Figures \ref{fig:sim_1a_scatters_mu} and \ref{fig:sim_1a_scatters_lambda}, suggesting an excellent model fit (results not reported). However, inference on the latent space resulted much less convincing, as shown in Figure~\ref{fig:sim_1b_scatters_gamma}.
% \begin{figure}[htbp!]
% \begin{center}
%  \includegraphics[width=0.99\textwidth]{Figures/sim_1b_scatters_gamma.png}
%  \caption{Simulation 1. Generated values of the dot products against their estimated posterior mean, for each setting considered and hyperparameters equal to $0.1^2$. The latent space is not recovered well in this case.}
%  \label{fig:sim_1b_scatters_gamma}
% \end{center}
% \end{figure}
% This suggests that the model is overfitting, and that one could prevent this overparametrization issue by selecting more informative priors.

To conclude the first simulation study, we illustrate the computing times in Figure~\ref{fig:sim_1_ctime}.
\begin{figure}[htbp!]
\begin{center}
 \includegraphics[width=0.49\textwidth, page = 1]{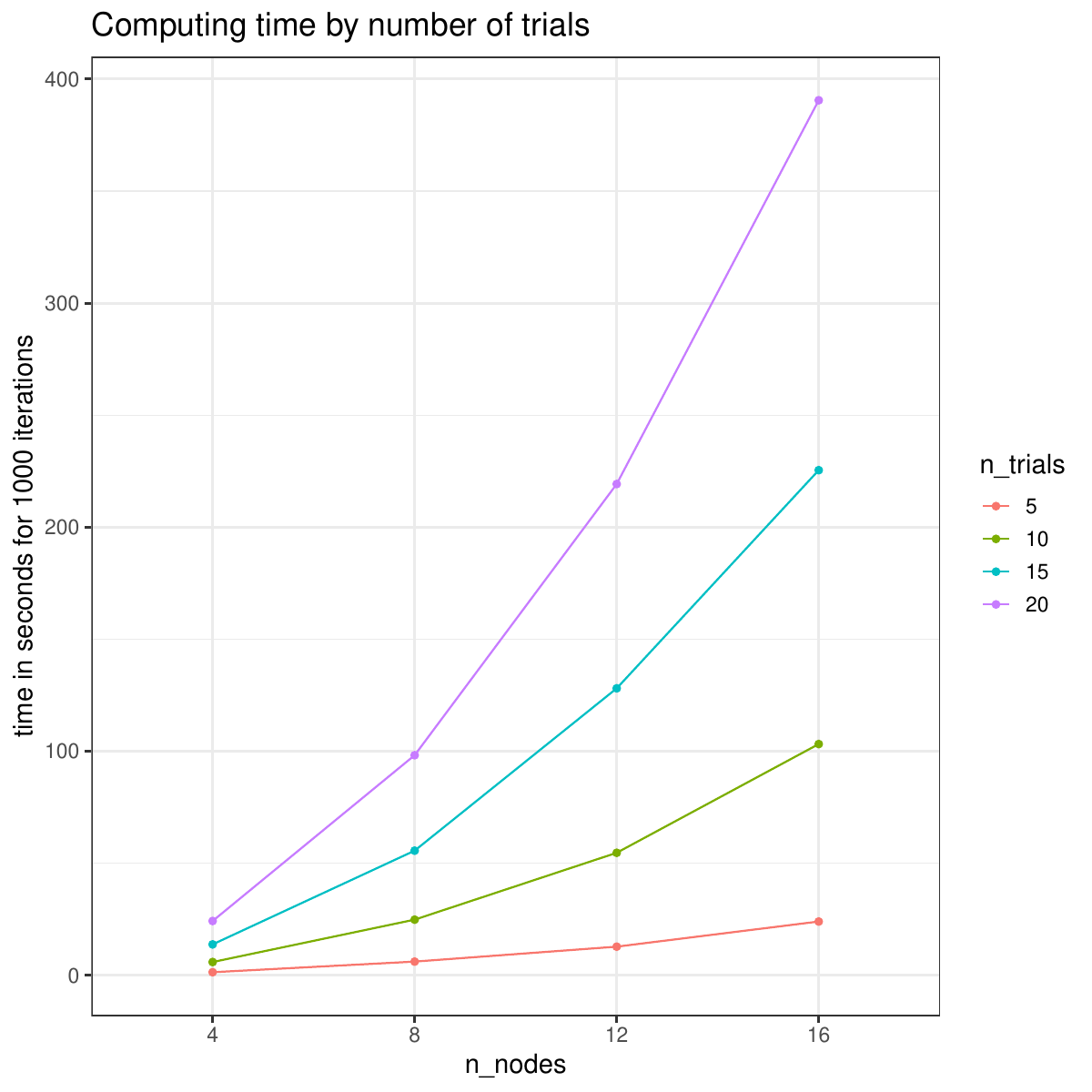}
 \includegraphics[width=0.49\textwidth, page = 2]{Figures/sim_1_ctime.pdf}
 \caption{Simulation 1. Computing time for all scenarios considered. As is common for latent space models, the computing cost grows quadratically with the number of nodes, shown in the left plot. The plot on the right shows instead a linear increase with respect to the number of time instances, defined as the number of trials times the number of time points in each trial (in simulation study 1, these two numbers are set to be equal to each other, in each dataset).}
 \label{fig:sim_1_ctime}
\end{center}
\end{figure}
The left panel highlights the quadratic increase in computing time as the number of nodes increases. The right panel shows instead the computing time as the number of nodes is fixed and the number of time instances varies. Here, the number of time instances on the x-axis is defined as the number of trials, multiplied by the number of time points in each trial, effectively representing the longitudinal size of the data.
This second plot shows that computing time increases linearly with the number of time instances, as expected for a dynamic latent space network model.

\subsection{Study 2}\label{sec:simulation2}
In the second simulation study we generate a single dataset with $N=20$, $M=50$ and $T_m = 25$ for all $m=1,\dots,M$. This setting is chosen since it is remarkably similar to the real datasets that we consider in this work. Unlike in the first simulation study, we generate the artificial data using our model's temporal dynamics. We generate the intercept at the very first time as $\mu_{1,1,\cdot} \sim \mathcal{N}(10,1)$ and then use the Markov properties across trials and across times to sample all other intercept values, as per Eq.~\eqref{eq:mu_priors_1a} and Eq.~\eqref{eq:mu_priors_1b}. The hyperparameters that we choose are $\eta_a = 0.05$ and $\eta_w = 0.005$. Similarly, for the latent positions, we use the temporal dynamics of Eq.~\eqref{eq:Z_prior_1a} and Eq.~\eqref{eq:Z_prior_1b} with hyperparameters $\zeta_a = \omega_a = 0.005$ and $\zeta_w = \omega_w = 0.05$. These values of the hyperparameters are identical to those that we use in our real data applications: more details on the motivation for these choices are given in Section \ref{sec:rda}.

We run our algorithm and, after burn-in and thinning, we obtain posterior samples of size $500$ for each of the model parameters. We confirmed satisfactory convergence of the MCMC chains using trace plots and convergence diagnostics\citep{raftery1992practical}. We then proceeded to compare our estimates with the true values of the parameters that have generated the data.

Figure \ref{fig:sim_2} illustrates the accuracy of the estimated parameters for the baseline intercept parameters, and for the network effects. 
\begin{figure}[htbp!]
\begin{center}
 \includegraphics[width=0.49\textwidth]{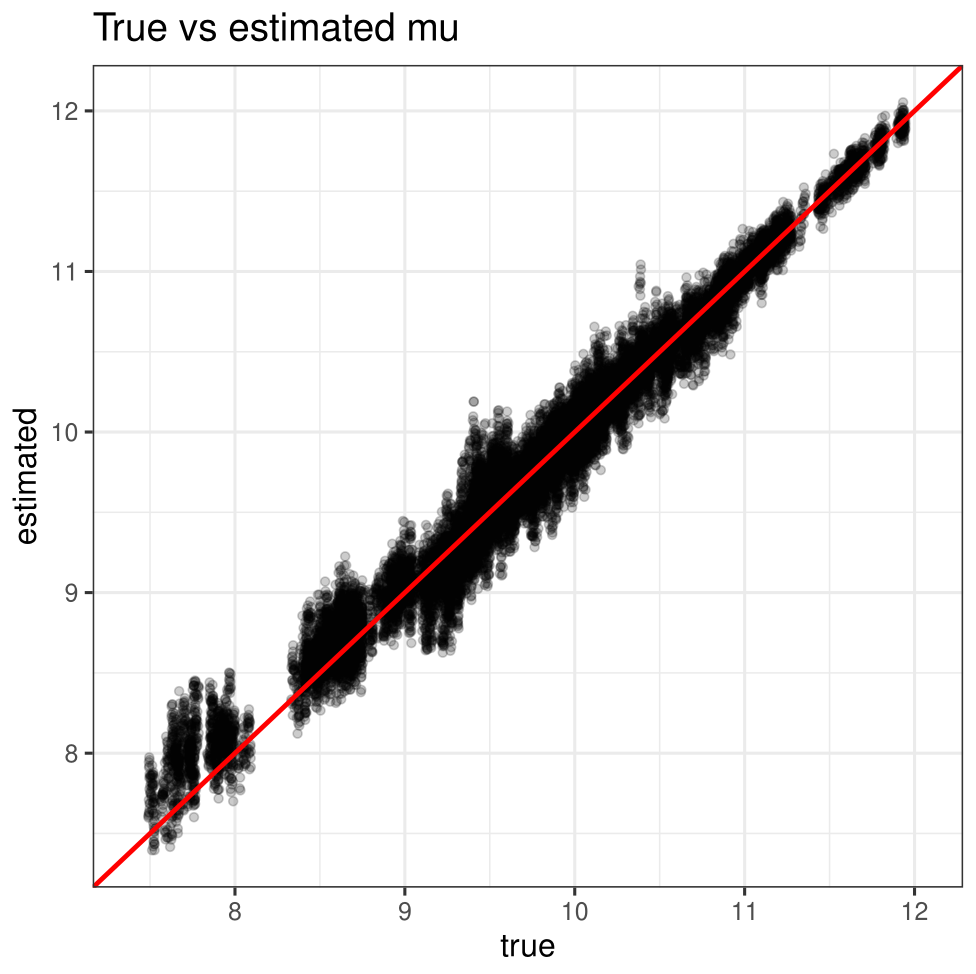}
 \includegraphics[width=0.49\textwidth]{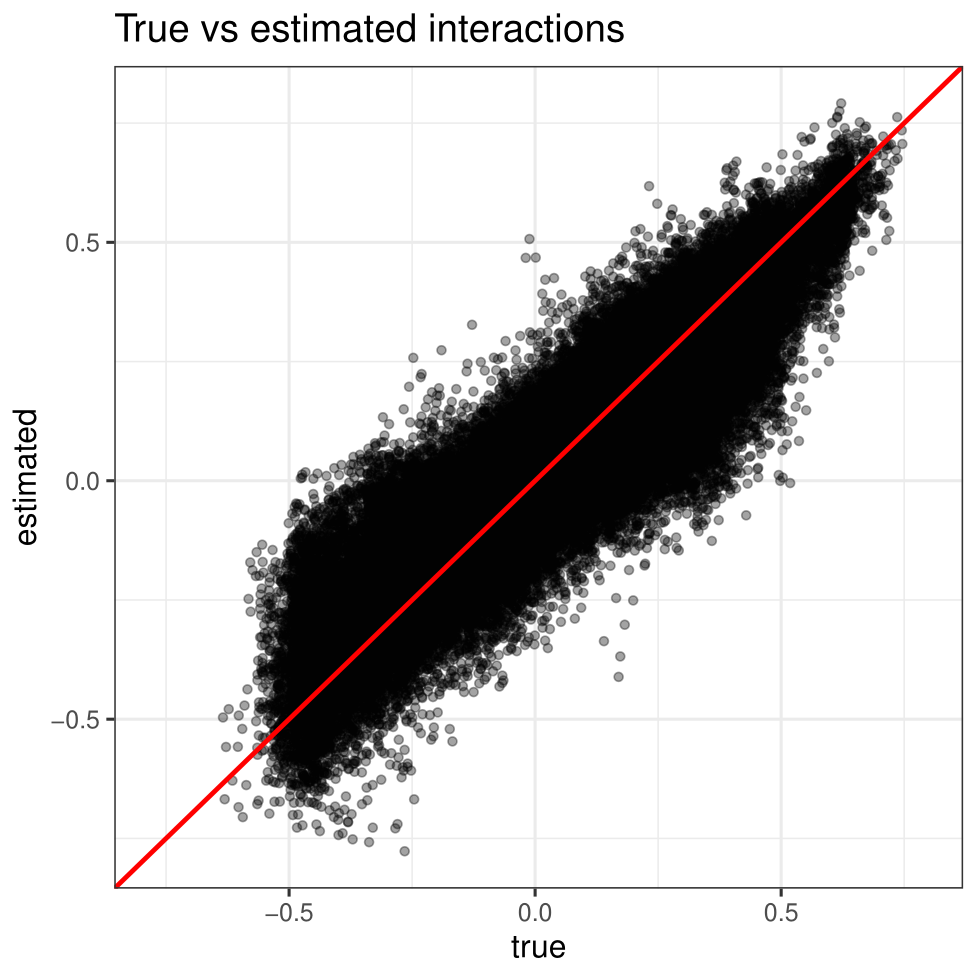}
 \caption{Simulation 2. The left panel shows the true generated values of the baseline parameter against its estimated posterior mean. The right panel shows the true generated values of all the interaction parameters (the dot products) against their estimated posterior mean.}
 \label{fig:sim_2}
\end{center}
\end{figure}
The results highlight that both the intercepts and all the pairwise network effects are very well recovered.
The magnitude of the errors for $\mu$ tends to decrease as the value of $\mu$ increases: this can be explained by the fact that, under a Poisson distribution, large observed counts can give more precise information on the exact value of the corresponding log-rates.

After the post-processing step described in Section \ref{sec:postprocess} using the true positions as reference, the aggregate latent space is obtained by calculating the posterior mean of each latent position for all nodes and times, and then averaging out these positions across time.   
Figure \ref{fig:sim_2b} illustrates the true and fitted latent spaces.
\begin{figure}[htbp!]
\begin{center}
 \includegraphics[width=0.6\textwidth]{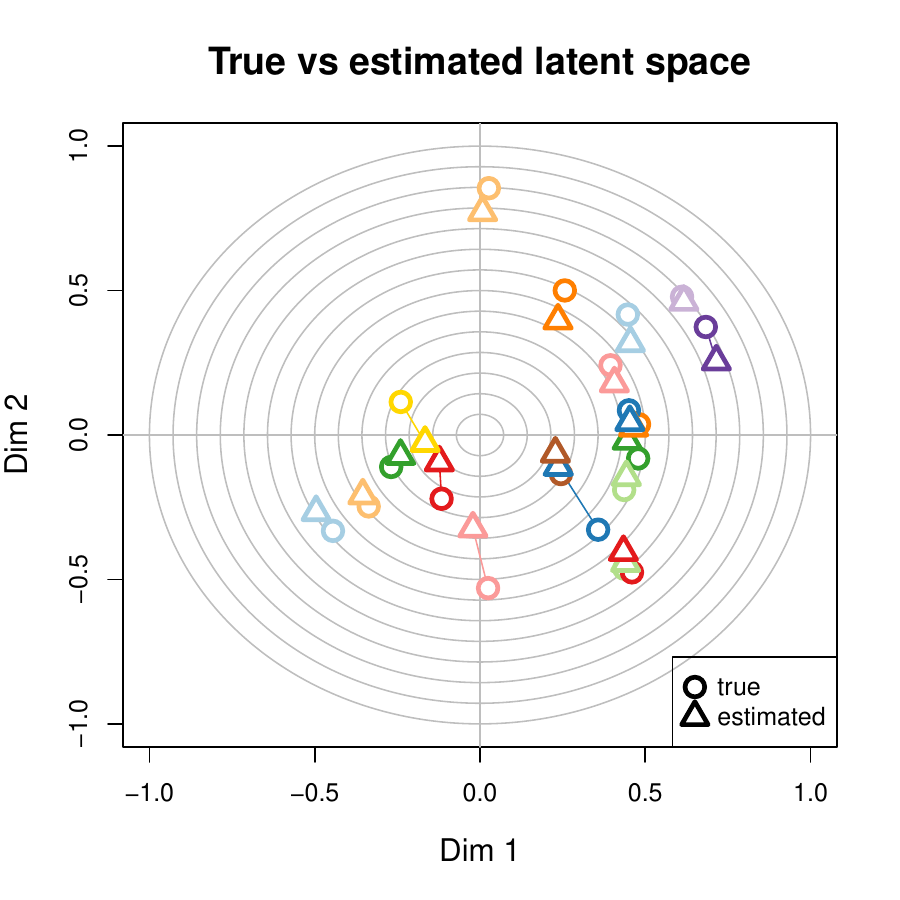}
 \caption{Simulation 2. True and estimated latent positions for the nodes, averaged out over times and trials.}
 \label{fig:sim_2b}
\end{center}
\end{figure}
The results highlight relatively small estimation errors which do not have any material effect on the interpretation of the results. 
Overall, the results indicate that, for a dataset similar to our real data applications, the proposed method can accurately estimate the relevant model parameters, and, in particular, the latent space that generates the network interactions.

%!TEX root = ../sn-article.tex

\section{Real Data Analysis}\label{sec:rda}

\subsection{Exploratory data analysis}\label{sec:eda}

We consider the recording of Session 12 from \cite{steinmetz2019distributed} that contains $M=340$ trials on Mouse Lederberg. The spikes are recorded from $12$ brain regions; however, we remove the \texttt{root} area from our illustrations of the results, since it is a collection of neurons with no known regions.  Figure~\ref{fig:eda_raw_data} shows the spiking activity of the recorded brain regions over the course of the session. The raw activity does not, on inspection, suggest any single dominant pattern that would be captured by a low-dimensional summary. 
\begin{figure}[htbp!]
\begin{center}
 \includegraphics[width=1\textwidth]{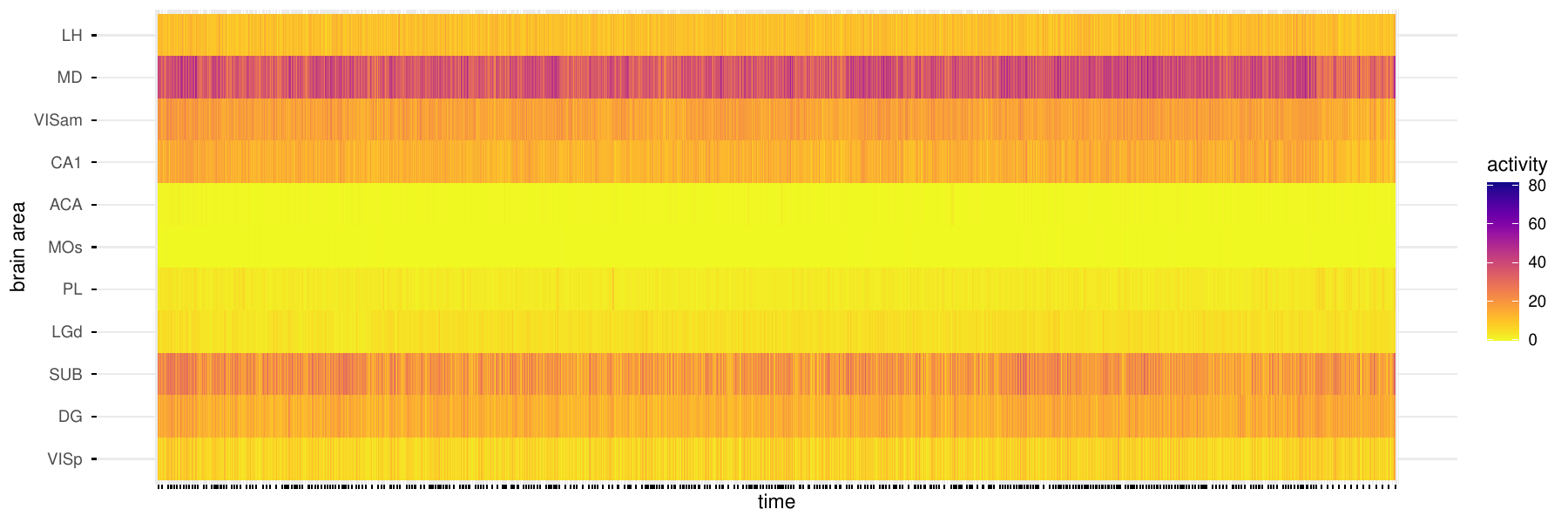}
 \caption{Activity level of $11$ brain areas from the recording of Session $12$. The session is composed of $340$ trials which are illustrated side-by-side, with ticks on the x-axis indicating their respective start times. The data indicates the spiking activity from the neuron associated to each brain area, over time intervals of $0.05$ seconds.}
 \label{fig:eda_raw_data}
\end{center}
\end{figure}

The experimental events and response rules are described in Section~\ref{sec:data}. For the analyses below, trials are aligned to stimulus onset, reaction time, or feedback time; the analysis window also includes the 0.2 seconds preceding stimulus onset to represent baseline neural activity.
Figure~\ref{fig:eda_salient} illustrates the distributions of these inter-event intervals across trials. The timing of salient events varies substantially across trials, motivating an analysis framework that aligns trials to behaviorally meaningful moments rather than to absolute time.

\begin{figure}[htbp!]
\begin{center}
 \includegraphics[width=0.9\textwidth]{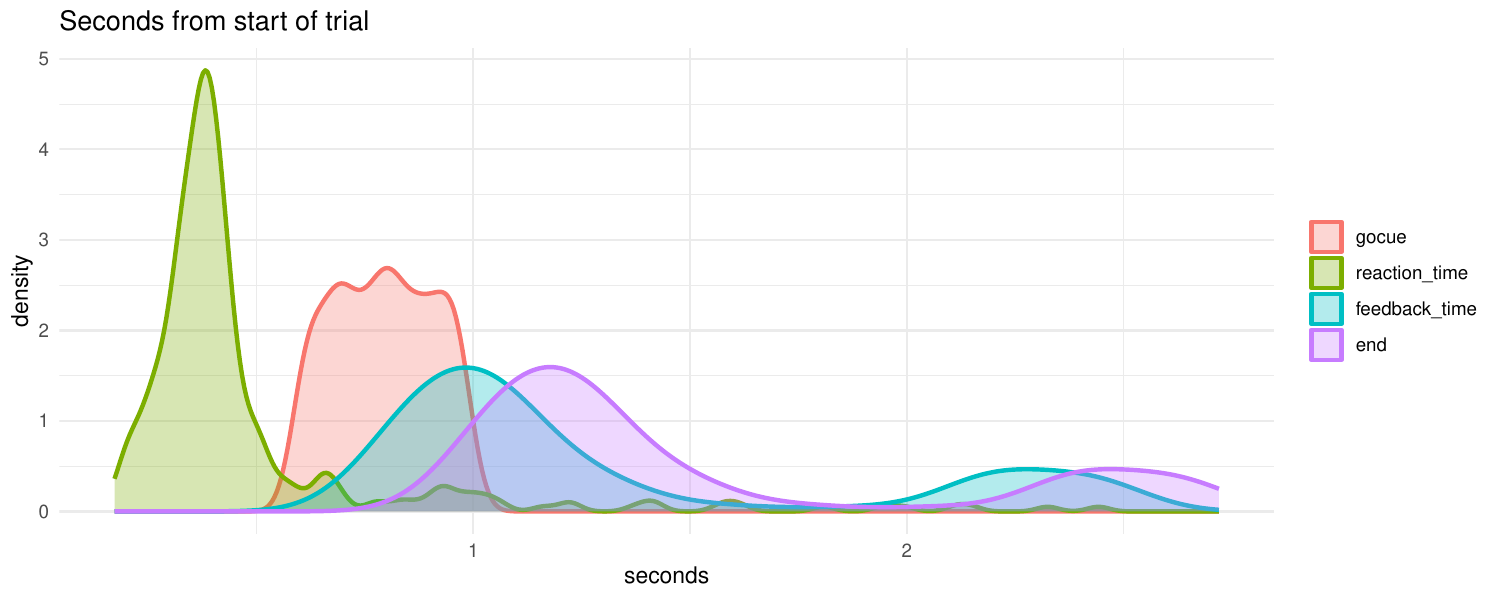}
 \caption{A density plot illustrating the time-distribution of the salient moments across trials.}
 \label{fig:eda_salient}
\end{center}
\end{figure}

The challenges that characterize each trial can have various difficulty levels, implied by the different levels of contrast, as we have outlined in Section \ref{sec:data}. The feedback is a binary variable which is equal to $1$ if the mouse receives water as a reward or $0$ otherwise. Figure \ref{fig:eda_difficulty_feedback_table} empirically summarizes the outcomes out of the $340$ trials, illustrating how positive feedback was more common for easier challenges, as it would be expected.
\begin{figure}[htbp!]
\begin{center}
 \includegraphics[width=0.8\textwidth]{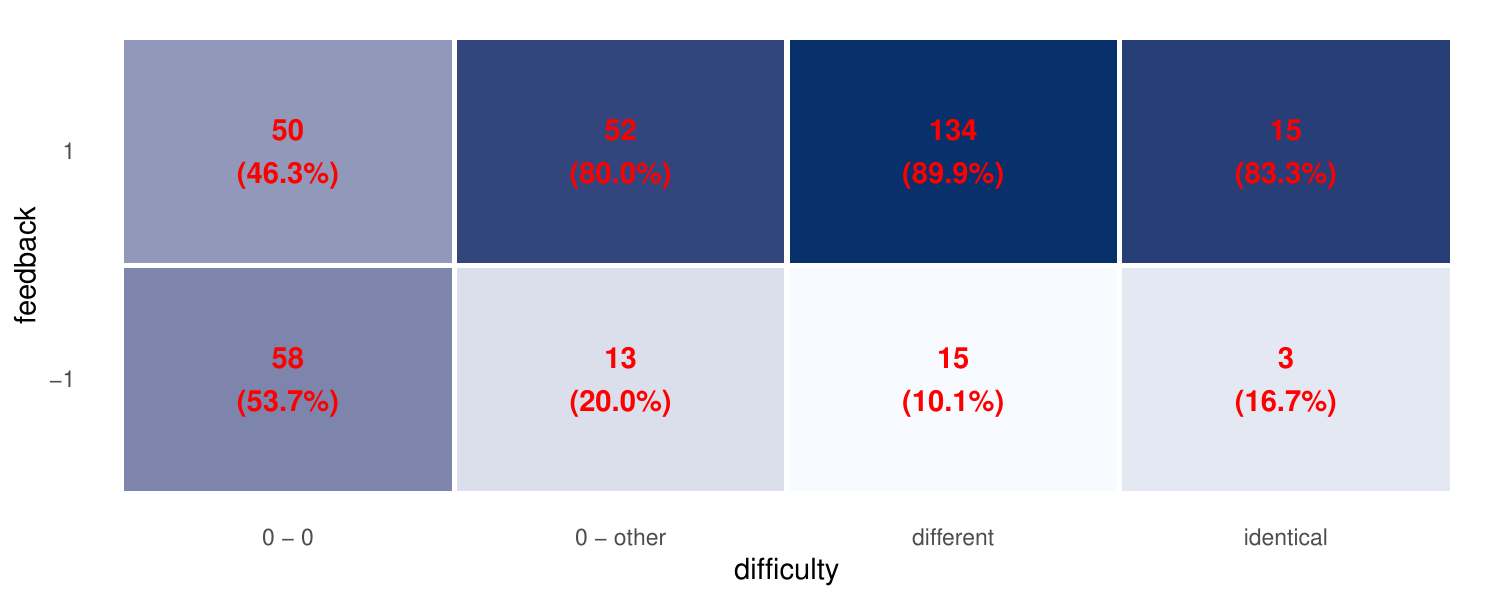}
 \caption{Frequency (percentage) table of trial difficulties and feedback types. The darker color indicates better performance for each difficulty level, independently. This illustrates that the mouse is generally successful, with the exception of the case \texttt{0-0}.}
 \label{fig:eda_difficulty_feedback_table}
\end{center}
\end{figure}

\subsection{Computational aspects}\label{sec:application_mcmc}

Markov chain Monte Carlo sampling was run for a grand total of $160{,}000$ iterations. During the first $150{,}000$ iterations, the proposal variances were individually tuned to achieve an acceptance rate between $20\%$ and $40\%$ for each individual model parameter. After this initial phase, the proposal variances were fixed for the rest of the sampling. The following $5{,}000$ samples were discarded as additional burn-in period. The last $5000$ iterations were used to create the final sample by retaining one every $20$-th observation. 

This procedure effectively returned a final approximate posterior sample of $250$ observations, for each model parameter. We found that this sample size permitted an accurate and nuanced representation of the posterior properties of the model, without overburdening our calculations with a large number of parameters and samples.
Convergence was confirmed using standard diagnostics tools, such as checking the trace plots of the Markov chains and the Raftery-Lewis~\citep{raftery1992practical} diagnostic to estimate required sample sizes. 

Regarding hyperparameters, we set the variance for the intercepts as $\eta_a = 0.1$ and $\eta_w = 0.0001$. For the priors of network parameters, we set $\zeta_a = \omega_a = 0.00005$ and $\zeta_w = \omega_w = 0.005$. This prior formulation allows $\mu$ to reflect changes in baseline activity across trials, while limiting its movements within each trial. By contrast, we let the latent positions vary only a little across trials, but give them more freedom to move within each trial. 
These choices are motivated by the fact that the trial is structured around salient events (stimulus onset, go cue, reaction time, feedback). 
Across these stages, patterns of \emph{inter-regional information flow} (or directed functional interactions) may change \citep{seth2015granger}, and this should be reflected in time-varying network-induced effects. In our model, such stage-dependent reconfiguration is captured by changes in the latent positions (and hence in $\bS_{m,t}$), while the baseline activity levels $\mu$ remain approximately constant during a particular trial.
Changes in the network structure (i.e., directed functional interaction) in the same session remain mostly unchanged across trials, but might evolve over a longer time scope due to neural plasticity.  
Thus, across trials, the network effects are maintained constant to avoid overparametrizing the model and to ensure more accurate estimates of the parameters.
By contrast, we allow the baseline activity level to change across trials. Variation in baseline activity may capture changes of patterns in the mouse's behavior over time, for example due to fatigue or changes in engagement. 
This complementary setting, whereby the baseline activity and network interactions engage with different aspects of the time dynamics, allows us to disentangle the contributions to the temporal dynamics that are given by one model component or the other.
This facilitates inference on the model parameters and interpretability of the results.

\subsection{Results}\label{sec:application_results}

In order to obtain a preliminary overview of the results, we aggregate the point estimates of all model parameters over time. Figure~\ref{fig:res_overview} shows the smoothed evolution across trials of the baseline parameters $\bmu$ (left panel) and the network-corrected log-rates $\log\blambda$ (right panel), plotted on the same scale to highlight how the activity of each brain area is affected by the inferred network interactions. Baseline activity varies substantially across regions: the mediodorsal thalamus (\texttt{MD}) is the most consistently active region, while the anterior cingulate area (\texttt{ACA}) and secondary motor cortex (\texttt{MOs}) are markedly less active. The cross-trial dynamics are modest in scale but well pronounced. Most regions show a slow upward drift, while \texttt{ACA} and \texttt{MOs} drift downward. Because the link function is logarithmic, posterior uncertainty is relatively wider for small values of $\mu$, and readings for low-activity regions should be interpreted with greater caution.

\begin{figure}[htbp!]
\begin{center}
 \includegraphics[width=0.49\textwidth]{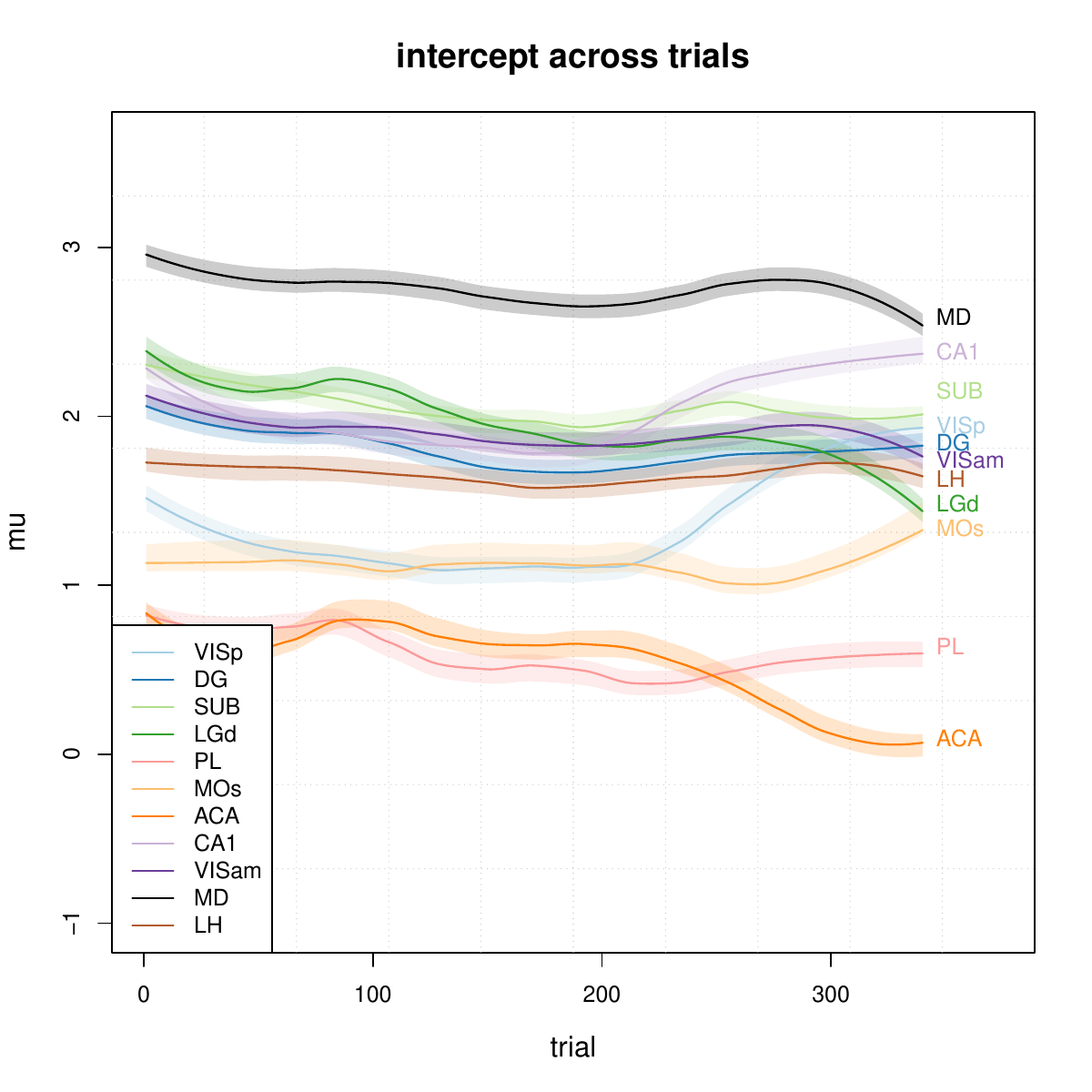}
 \includegraphics[width=0.49\textwidth]{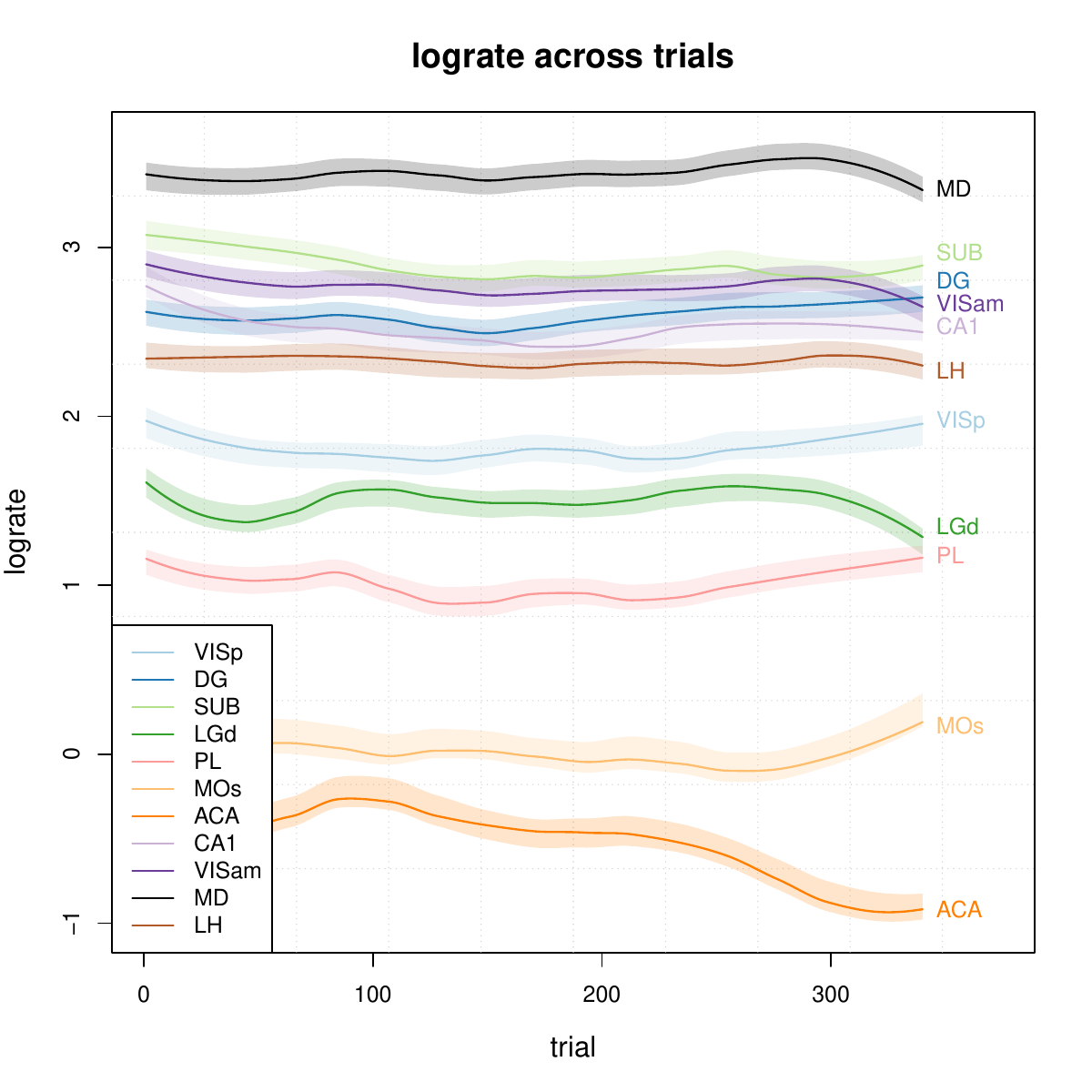}
 \caption{Loess smoothed intercept terms (left panel) and log-rates (right panel) to show their time dynamics across the $340$ trials. The log-rates exhibit more heterogeneity, signaling that the network interactions can amplify or reduce the activity of the brain areas in different ways. Loess smoothed credible intervals are added to represent uncertainty.}
 \label{fig:res_overview}
\end{center}
\end{figure}

%The shapes of the posterior distributions of the positions appear non-Gaussian, with most posterior clouds displaying non-spherical shapes and, to some extent, multi-modality. 
A central contribution of our model is a visual representation of the directed functional architecture of the recorded network. Figure~\ref{fig:res_overview_space} summarizes the inferred latent positions, where each position is averaged across all iterations, trials, and time after Procrustes' rotation. The geometry encodes the network interactions through the dot product $\bz_i \cdot \bz_j$: two regions in the same angular sector excite one another, two regions in opposite sectors inhibit one another, and orthogonal regions interact only weakly, while the magnitude $\|\bz_i\|$ controls how strongly each region participates.

\begin{figure}[htbp!]
\begin{center}
 \includegraphics[width=0.7\textwidth]{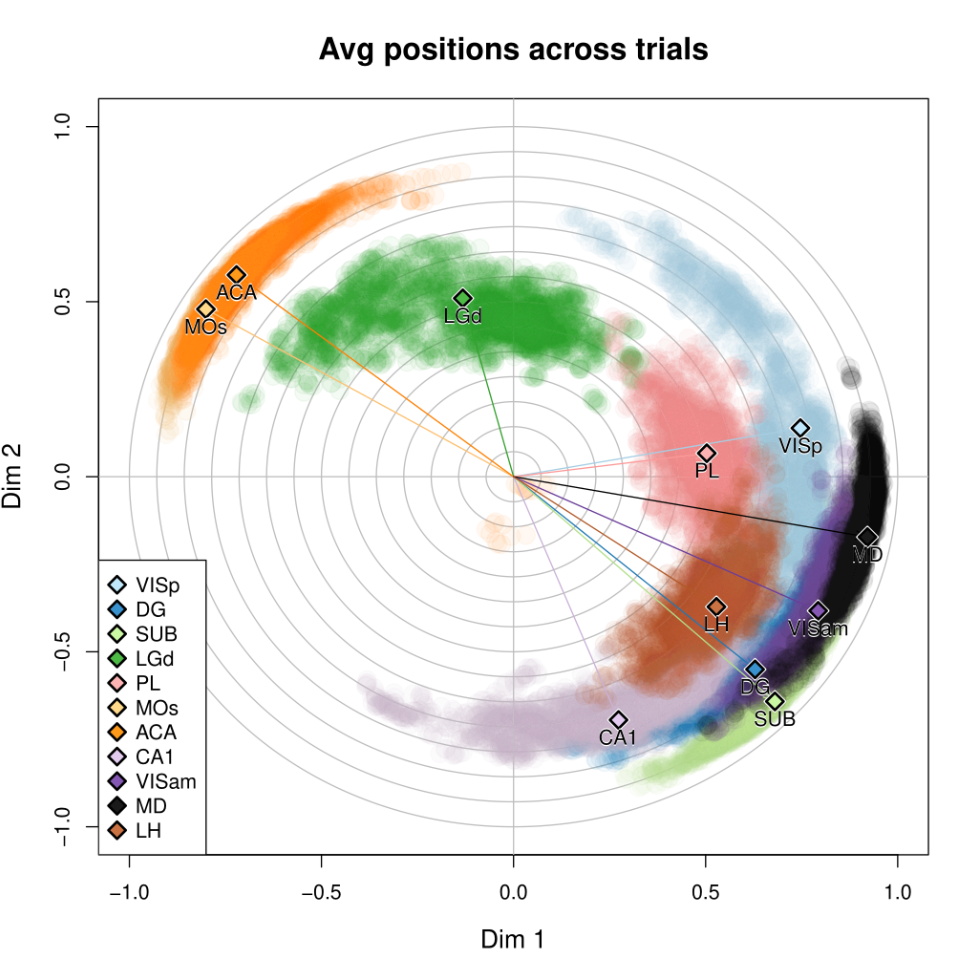}
 \caption{Latent space that arises after Procrustes' rotations illustrating the sampled positions across all brain areas, times, trials and iterations. The colors correspond to the different brain areas. The squares indicate the positions for each brain area averaged out across iterations, times and trials.}
 \label{fig:res_overview_space}
\end{center}
\end{figure}

Two groups of nodes show strong positive within-group interactions: \texttt{MD}, \texttt{CA1}, and the primary visual cortex (\texttt{VISp}); and a second group comprising the dentate gyrus (\texttt{DG}), the lateral hypothalamus (\texttt{LH}), the subiculum (\texttt{SUB}), the higher visual area \texttt{VISam}, and the prelimbic cortex (\texttt{PL}).
The two groups generally compete, in the sense that higher activity in one group reduces activity in the other and vice versa. The dorsal lateral geniculate nucleus (\texttt{LGd}) sits opposite to the second group. This separation is consistent with \texttt{LGd}'s role as the exclusive thalamic relay of retinal input to \texttt{VISp} \citep{kerschensteiner2017organization, guido2018development}, whose trial-by-trial variability is dominated by retinal drive rather than by lateral coupling with the other recorded sites. \texttt{ACA} and \texttt{MOs} sit in strong opposition to most other recorded regions, consistent with their established roles as top-down modulators whose bulk variability is shared with the cortical sites they project to \citep{norman2021chemogenetic, yang2021secondary, zatka2021sensory}.

\subsection{Individual network effects across trials}
We now isolate the role of network interactions and quantify how much they amplify or attenuate each region's activity. We define $\psi_{m,t,i} \equiv \exp\{\log(\lambda_{m,t,i}) - \mu_{m,t,i}\}$ as the network effects, i.e., the multiplicative factor by which the Poisson intensity is modulated as a consequence of network interactions. 
In the absence of network interactions, $\mu_{m,t,i} = \log(\lambda_{m,t,i})$ and $\psi_{m,t,i} = 1$. If the net network effect on node $i$ is positive (resp.\ negative), then $\psi_{m,t,i} > 1$ (resp.\ $\psi_{m,t,i} < 1$). Equivalently, Eq.~\eqref{eq:model_matrix_notation_2} gives $\gamma_{m,t,i} = \log\lambda_{m,t,i}$, the network-adjusted log-rate, so $\psi_{m,t,i} = \exp(\gamma_{m,t,i} - \mu_{m,t,i})$ is the multiplicative correction by which the baseline rate $\exp(\mu_{m,t,i})$ is rescaled to the network-adjusted rate $\lambda_{m,t,i}$. The quantity $\psi$ admits an alpha-centrality interpretation that it represents the local centrality contribution of node $i$ within the directed influence network defined by $\bS_{m,t}$.

The left panel of Figure~\ref{fig:ne_intercept_vs_logrates} shows the time-averaged baseline activity $\mu$ ($x$-axis) against the time-averaged network-corrected log-rate $\log\lambda$ ($y$-axis), each averaged across all trials. Points above the $y=x$ diagonal correspond to $\psi>1$, where the network has a synergistic effect on activity, whereas points below correspond to $\psi<1$, where the network effect suppresses activity. Most regions sit above the diagonal indicating synergistic network effects, except for  \texttt{ACA}, \texttt{MOs}, and \texttt{LGd}.
 
\begin{figure}[htbp!]
\begin{center}
 \includegraphics[width=0.49\textwidth]{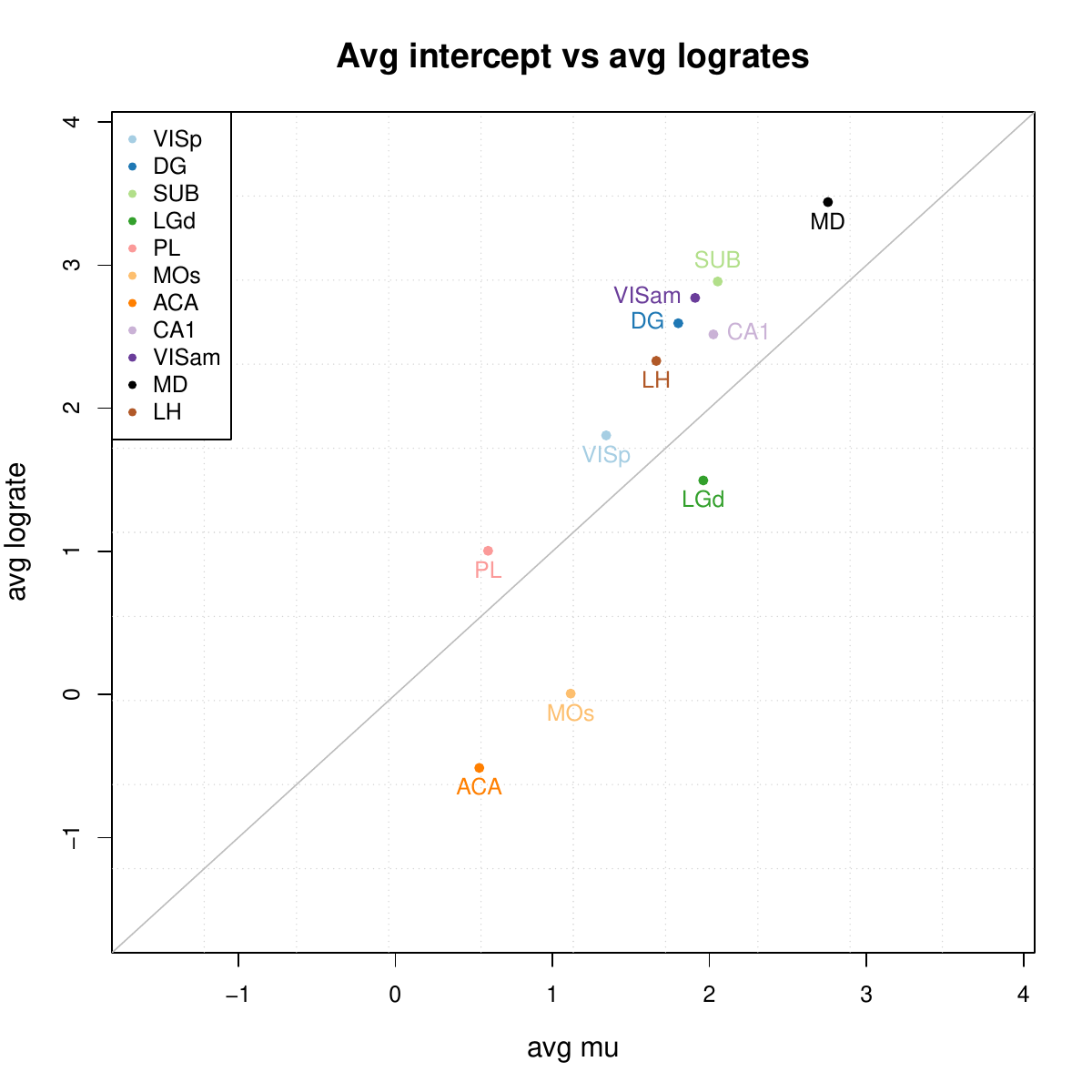}
 \includegraphics[width=0.49\textwidth]{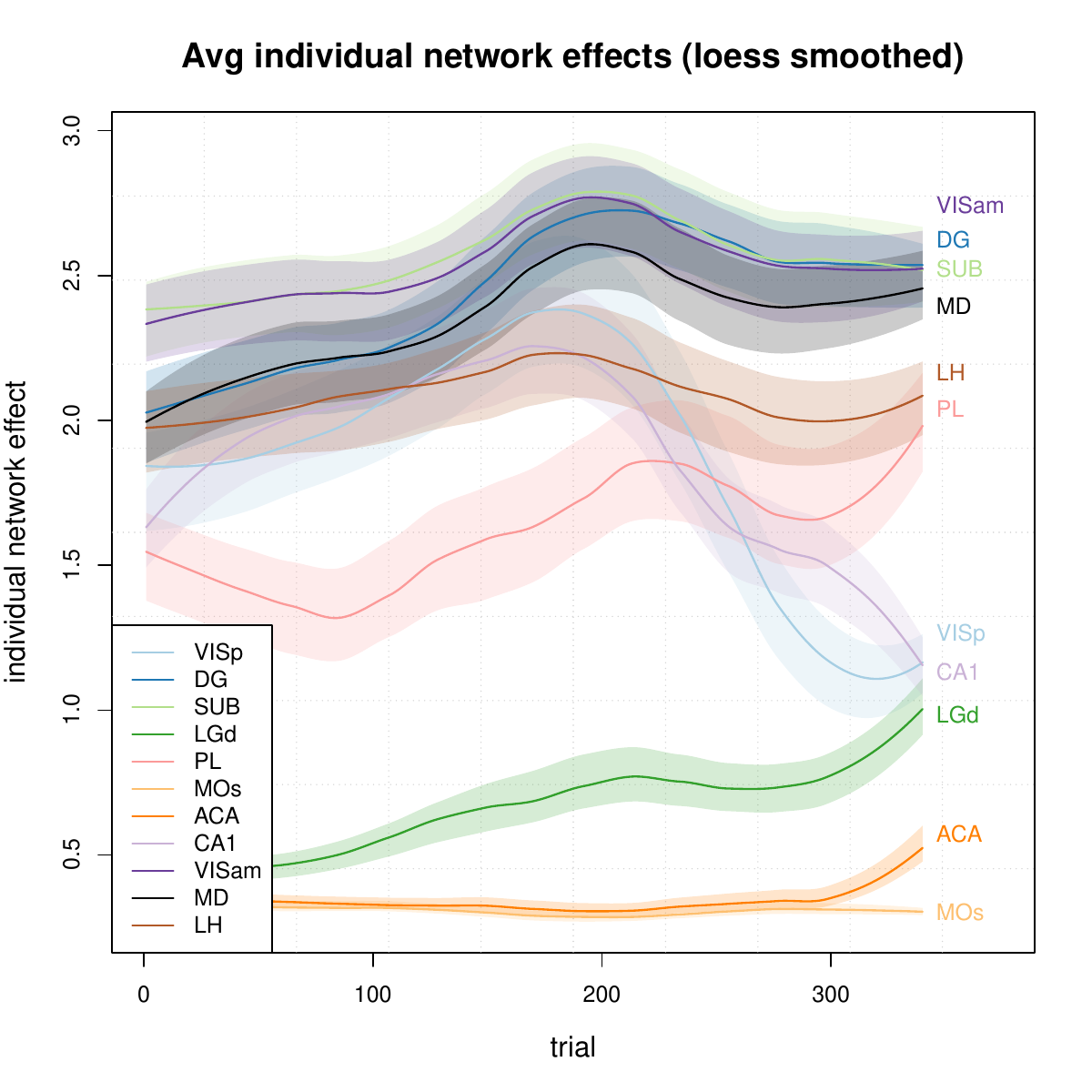}
 \caption{Left panel: contrast between the baseline activity and the activity after the network correction is applied as per Eq.~\eqref{eq:model_matrix_notation_2}. For most brain areas, the network effect has a synergistic role because it increases the activity of the brain areas. Right panel: spillover of activity averaged out over the duration of each trial. The network interactions tend to become weaker as the experiment progresses.}
 \label{fig:ne_intercept_vs_logrates}
\end{center}
\end{figure} 

We next examine how network participation evolves across trials for each region. The right panel of Figure~\ref{fig:ne_intercept_vs_logrates} shows $\psi$ trajectories, smoothed for visual clarity, with one line per region.  Most regions trend downward in the second half of the trials, with the steepest decreases in \texttt{VISp} and \texttt{CA1}. 
\texttt{LGd} and \texttt{ACA} show the opposite trajectory, where their $\psi$ values increase steadily throughout the session. 
These changes show a gradual weakening of the network's multiplicative correction across the session, which draws each region's $\psi$ toward $1$ from its respective side of the diagonal. This is consistent with the within-session engagement decline documented by \citet{steinmetz2019distributed} in the source dataset (Miss-trial streaks toward the end of the session), if disengagement is accompanied by reduced cross-region coupling.
% It should be noted that the prior specification imposes a strong penalization over the initial conditions of each trial, in that the starting positions of the nodes are essentially the same at each trial. However, once the trial starts, the prior lets the nodes can move more freely. The plot shows the average value of the network effects during the duration of each trial: this is not directly affected by the prior. The lines have been smoothed to facilitate interpretation, and the general trend appears to be downwards.

\subsection{Global measures of network effects within trials}

We propose another perspective on the network dynamics, but we now focus on an aggregation of the trajectories within trials, to assess the time evolution of the network effects in relation to the critical moments of the trials. 
Studying individual within-trial network effects is challenging because of the model's multi-dimensionality: (1) the interaction matrix is time-varying within each trial, and (2) the latent trajectories may differ across trials even when they share the same starting point, making direct visualization difficult. We therefore propose a global measure of network interactions, starting from the matrix $\bS_{m,t}$ defined in Eq.~\eqref{eq:def_S_matrix}. The $(i,j)$th entry of this matrix records the directed network influence from node $j$ on node $i$ at time $t$ of trial $m$. We take the spectral radius of $\bS_{m,t}$ as a global summary of how strongly nodes can affect one another at that moment. This is directly related to alpha centrality and Proposition~\ref{prop:gamma_existence}: the spectral radius of $\bS_{m,t}$ measures the magnitude of network spillover between nodes, and the statistic therefore complements the local $\psi$ statistic introduced in the preceding subsection by quantifying network coordination at the whole-network level.

We illustrate the change of this spectral measure across trials using box plots, aligned to three behaviourally salient moments: stimulus onset, reaction time, and feedback time. Figure~\ref{fig:spectral_within_shifted_stimulus} shows the spectral measure aligned to stimulus onset, with box plots summarizing the variability across trials at each within-trial time point. Network coordination rises sharply within the first few hundred milliseconds after stimulus onset, most notably for the \texttt{identical} trial setting. In these trials discrimination is non-trivial, plausibly increasing the demand for cross-region evidence integration relative to easier trials. We cannot causally attribute this difficulty modulation to evidence integration in our setting, but the pattern is consistent with the broader finding that effective connectivity from sensory to higher-order cortical regions scales with perceptual difficulty in human fMRI \citep{lamichhane2015perceptual}, and that ambiguous-evidence integration in mice is a distributed cortical computation rather than a localized one \citep{pinto2022multiple}.

\begin{figure}[htbp!]
\begin{center}
 \includegraphics[width=0.99\textwidth]{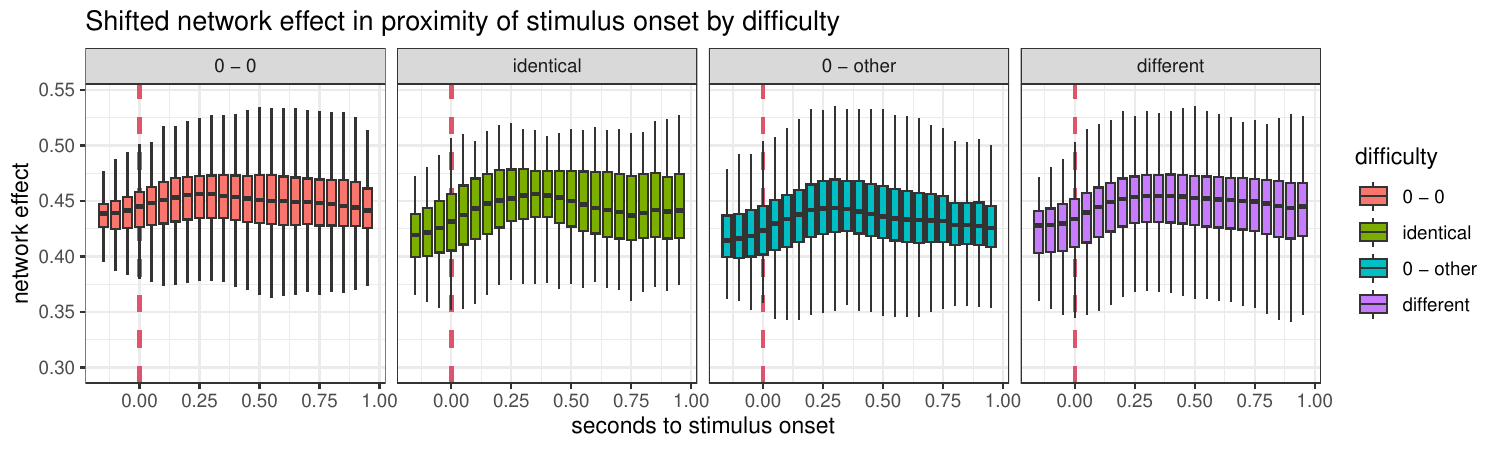}
 \caption{Network effect measured by the spectral radius of $\bS_{m,t}$. The timing of trials IS aligned according to the stimulus onset. The vertical line shows the moment when the images are shown to the mouse. These plots illustrate that, regardless of trial type, the network interactions are minimal before the images are shown to the mouse. After the stimulus onset, the network effect increases for all trial types.}
 \label{fig:spectral_within_shifted_stimulus}
\end{center}
\end{figure}

Figure~\ref{fig:spectral_within_shifted_reaction} shows the spectral measure aligned to the reaction time. 
Here, we see a strong increase in the network effect in correspondence of the mouse's reaction, across all difficulties, to a similar extent. 
This is consistent with previous observations in the same task family that movement onset is accompanied by broadly distributed, transiently coordinated activity across cortical and subcortical regions \citep{steinmetz2019distributed,coen2023mouse}. The peri-reaction-time spike in our spectral measure thus likely reflects the network-level signature of this transient, brain-wide motor coordination.
The results do not show much difference between trial setups, indicating similar behavior of the mouse across experimental conditions.

\begin{figure}[htbp!]
\begin{center}
 \includegraphics[width=0.99\textwidth]{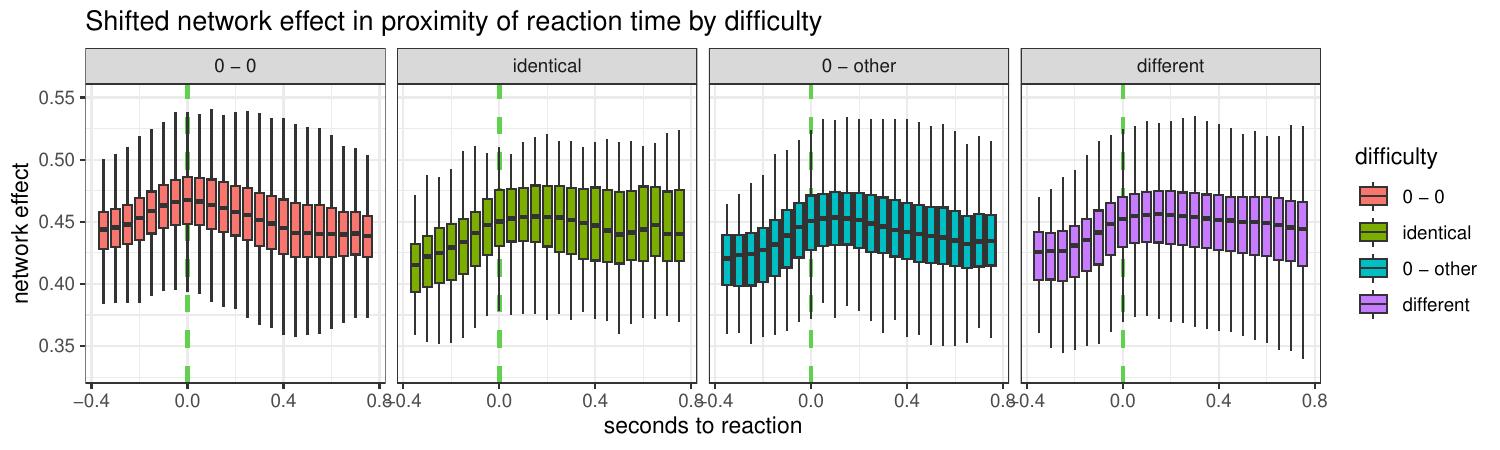}
 \caption{Network effect measured by the spectral radius of $\bS_{m,t}$. The timing of trials IS aligned according to the moment when the mouse reacts to the stimulus. The vertical line indicates this moment of the trials. These plots illustrate that the network interactions increase sharply for all trial settings, leading up to and immediately after the reaction of the mouse.}
 \label{fig:spectral_within_shifted_reaction}
\end{center}
\end{figure}

\begin{figure}[htbp!]
\begin{center}
 \includegraphics[width=0.99\textwidth]{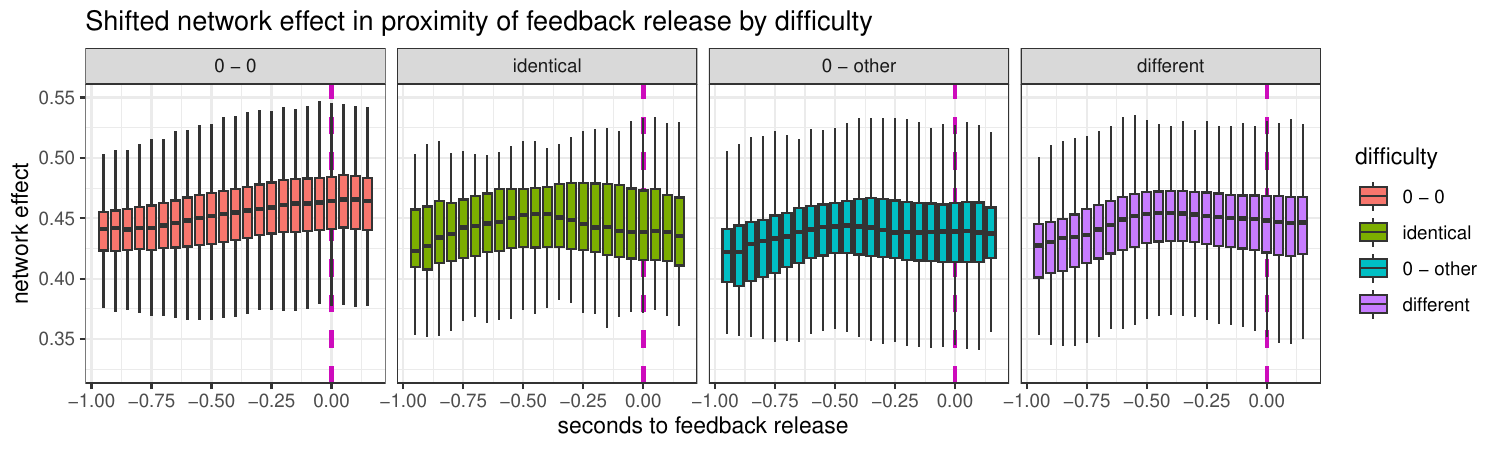}
 \caption{Network effect measured by the spectral radius of $\bS_{m,t}$. The trials are aligned to the instant when the feedback is delivered. The vertical line indicates this moment. The network effect does not show a consistent trajectory across difficulty levels, indicating that the mouse may display diverse behavioral patterns in correspondence to this timing. As shown in Figure \ref{fig:eda_salient}, the feedback timing can vary a lot across trials, thus it does not facilitate the isolation of its effect.}
 \label{fig:spectral_within_shifted_feedback}
\end{center}
\end{figure}

In this case, the time dynamics of the network effect are not consistent across difficulty levels. In some cases, the network effect peaks before feedback delivery; in other cases, it peaks afterward. This can suggest more heterogeneity which may be associated to other aspects of each trial. Since there is no apparent difference between the various difficulties, this may suggest that the animal does not react strongly to different outcomes, and perhaps remains indifferent to the feedback type.

\subsection{Latent spaces at salient moments}

We now leverage the within-trial dynamics of the latent positions to extract the model's view of the network at the most salient moments of the task. We shift the time dimension of each trial to match the inferred latent spaces at five instants: trial start, stimulus onset, reaction time, feedback release, and trial end. Each latent position in Figure~\ref{fig:latent_spaces_critical} is averaged across all iterations and across all trials in which the corresponding salient moment is reached, with arrows indicating the trajectory of each region from one salient moment to the next.

\begin{figure}[htbp!]
\begin{center}
 \includegraphics[width=0.7\textwidth]{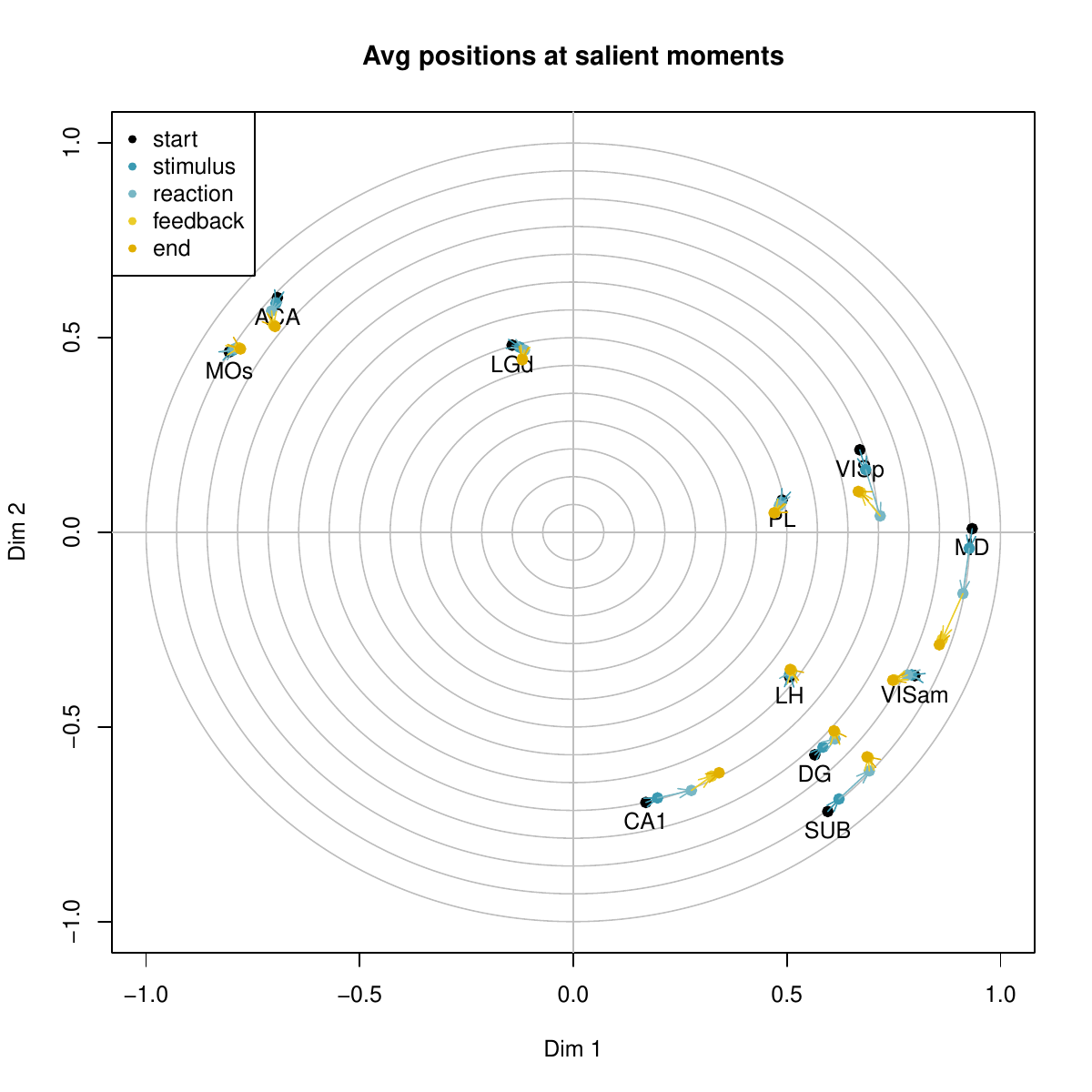}
 \caption{Average latent positions of the brain areas extracted at five salient instants of each trial. The plot illustrates the hidden framework which explains the network interactions via latent space snapshots at the critical stages of the trials. In correspondence to the reaction time, the nodes tend to tighten the radius of the latent space, suggesting a stronger (but temporary) synergy during this phase. The level of participation of nodes in this coordinated movement seems rather heterogeneous.}
 \label{fig:latent_spaces_critical}
\end{center}
\end{figure}

Each of the latent positions in the plot is averaged out across all iterations and across all trials where the corresponding salient moment is reached. 
The plot shows that the nodes remain relatively stable, but a more noticeable movement spike is generally observed before the reaction time. During this time, nodes seem to generally converge towards each other, indicating a radial contraction on the latent space. This is apparent for the nodes \texttt{SUB}, \texttt{VISp}, \texttt{CA1} and \texttt{MD}. The movement of these brain regions tends to align them better with the direction of several other nodes, which is consistent with the increase in activity just before the reaction time.

%!TEX root = ../sn-article.tex

\section{Discussion}\label{sec:discussion}

We introduced a new methodology to analyze multivariate time series and proposed an application to spike-train data. A central and novel aspect of our proposed framework is that we infer network interactions between the time series, and characterize these using a latent space approach. This approach leads to new model-based visualizations and perspectives on the interactions between brain areas, breaking down complex patterns and explaining them using latent variables. Critically, the model features a nested hidden Markov temporal structure which is coherent with the study design, and is able to capture the non-linear time dynamics of the experimental conditions.

We performed inference under a Bayesian framework and we demonstrated with simulations that the model parameters can be accurately recovered for large datasets, identifying the hidden network interactions and their effects on the time series. Our application to spike-train data studied how the brain areas of a mouse activate during a sequence of experimental tasks, and it illustrated how the brain areas can synergize or compete during the experiment. As an output of our method, we introduced several model summaries which quantify the network interactions and their effects on the baseline activity levels of the brain regions.

Our work leaves several possible avenues for future research. In connection to the literature on latent variable network models, a natural variant of our approach could be to use a Euclidean distance based latent space, as in the pioneering work of \citet{hoff2002latent}, or other types of latent space geometries.
Alternatively, adaptations of the dynamic stochastic blockmodel~\citep{matias2017statistical} could also lead to interesting results, considering that this can generally scale better with the size of the data, and that categorizations of brain regions can also be available. 

Finally, computational efficiency remains a critical obstacle for latent space models, but also, more in general, for network-based frameworks. As we have shown in this work, our methodology scales with the square of the number of nodes, which makes it impractical for larger datasets. In the context of brain networks, we have bypassed this limitation by considering an aggregation of neurons into brain areas, however, one could also consider faster approaches such as approximate Markov chain Monte Carlo~\citep{rastelli2018computationally} or variational inference~\citep{salter2013variational} to extend the applicability of this new framework.

\backmatter

% \bmhead{Supplementary information}

\bmhead{Acknowledgments}
This publication has emanated from research conducted with the financial support of Taighde \'Eireann – Research Ireland, under Grant number $[23/RC/13506]$ at the Research Ireland Centre - Rinn Artificial Intelligence.

% \section*{Declarations}

% \begin{appendices}

% \section{Section title of first appendix}\label{secA1}

% An appendix contains supplementary information that is not an essential part of the text itself but which may be helpful in providing a more comprehensive understanding of the research problem or it is information that is too cumbersome to be included in the body of the paper.

% \end{appendices}

\bibliography{sn-bibliography}

\appendix
%!TEX root = ../sn-article.tex

\section*{Appendix: Poisson log-Normal projection model}\label{sec:pln_projection}

While numerous frameworks are available to model abundance-type data using latent variables, none of these methods embed a latent space model to characterize the nodes' interactions. For this reason, there is no obvious method that we can use for a direct comparison. Besides the novel proposal of our research article, one may attempt to model abundance data using a two-steps procedure: one can initially fit a generalized Poisson regression model to the counts, and then fit a network model based on the inferred interaction or correlation matrix. We should note that this ad-hoc approach disregards any temporal aspects that may be present in the data, and is proposed as it may be an intuitive approach underpinned by currently available methods.

In this section we employ this two-steps procedure on the real dataset of Section \ref{sec:rda}. We start by aggregating all the counts across all time points, for each individual trial. This leaves us with a matrix of counts for $340$ trials and $11$ brain areas.
We use the function \texttt{PLN} from the \texttt{PLNmodels} package to fit a Poisson log-Normal model to the data. The model introduces an intercept and models the counts directly.

Figure \ref{app:pln_fit} (left panel) shows that the Poisson log-Normal model can fit the data very well.
\begin{figure}[htbp!]
\begin{center}
 \includegraphics[width=0.49\textwidth, page=2]{Figures/pln_fit_aggr.pdf}
 \includegraphics[width=0.49\textwidth, page=1]{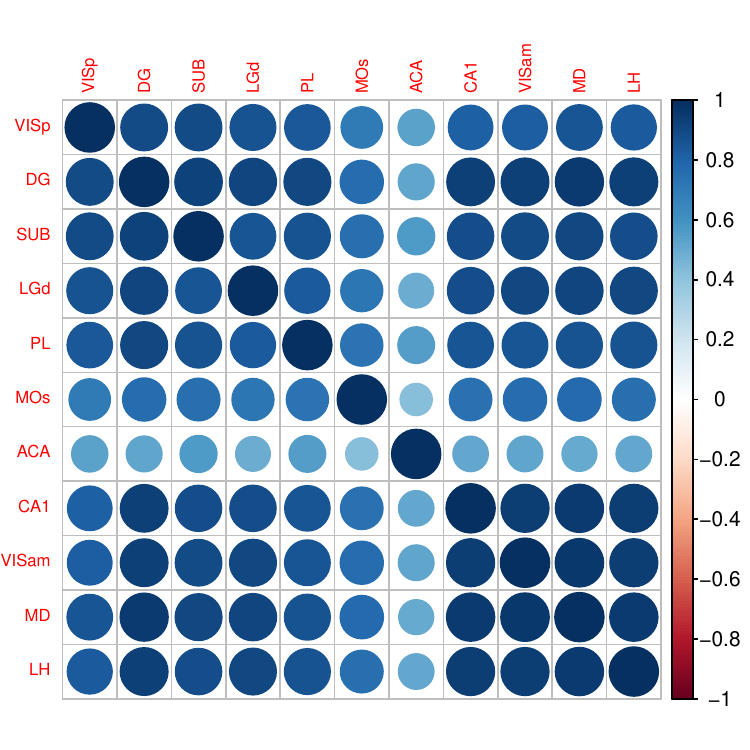}
 \caption{The left panel shows the predicted vs observed counts according to the fitted Poisson log-Normal model. The right panel shows the resulting correlation matrix.}
 \label{app:pln_fit}
\end{center}
\end{figure}
The right panel of the same figure highlights the inferred interactions between the brain areas. On the basis of this correlation matrix, we construct a network adjacency matrix that is neither too sparse nor too dense. We calculate the median of the correlation values, and then create the adjacency matrix by thresholding the correlations at this value. That is, an edge is created between two brain regions if and only if their correlation in the fitted Poisson log-Normal model is greater than the median correlation value. 
This leaves us with an undirected network where approximately half of the possible edges appear, to indicate brain areas whose spike values tend to positively correlate.

We proceed by fitting a $2$-dimensional projection model using the \texttt{R} package \texttt{latentnet}, using default settings. The resulting latent space is shown in Figure \ref{app:pln_lsm}.
\begin{figure}[htbp!]
\begin{center}
 \includegraphics[width=0.7\textwidth]{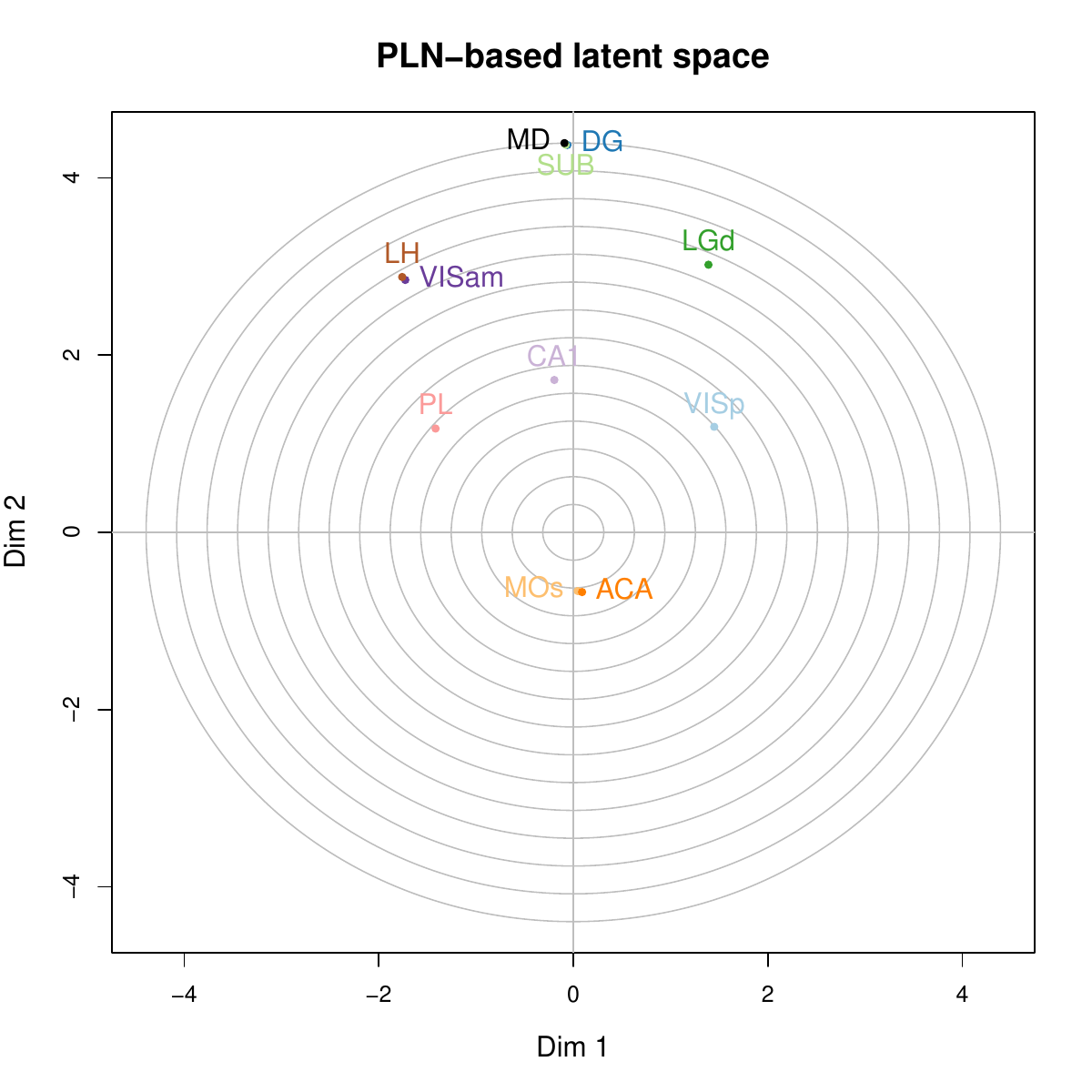}
 \caption{Resulting latent space for a projection model fitted on the correlation matrix of a Poisson log-Normal model on time aggregated data.}
 \label{app:pln_lsm}
\end{center}
\end{figure}
This latent space shares some similarities with that obtained with our proposed framework. The approach captures the separation of regions \texttt{ACA} and \texttt{MOs} (but not \texttt{LGd}) from the rest. The positions of the other brain areas do not seem coherent with those obtained with our proposed framework.

Overall, the two-step procedure based on the Poisson log-Normal and the projection model can capture some features of the data at low computing cost. However this approach does not take into account the nested temporal aspects of this dataset, and does not allow for the in-depth model-based characterization based on the temporal trajectories and network centrality scores. 

We should note that we also tried to fit a Poisson log-Normal model using the \texttt{PLNnetwork} function, which delivers an inferred network instead of a correlation matrix. However, in this case the inferred network did not have any edges, so we dismissed the results.

\end{document}